\pgfplotsset{width=8cm,compat=1.9}
\newtheorem{prop}{Proposition}
\newtheorem{theorem}{Theorem}
\newtheorem{lemma}{Lemma}
\newtheorem{deftn}{Definition}
\newtheorem{corol}{Corollary}
\newcommand{\ie}[0]{\textit{i.e.}}
\newcommand{\eg}[0]{\textit{e.g.}}
\begin{document}
%
\title{Analysis of Slotted ALOHA with an Age Threshold}
%
%
%

\author{\normalsize{Orhan Tahir Yavascan and Elif Uysal \\
Dept. of Electrical and Electronics Engineering, METU, 06800, Ankara, Turkey \\
  \{orhan.yavascan,uelif\}@metu.edu.tr}
	\thanks{}
}

%


\setlength{\belowdisplayskip}{2pt} \setlength{\belowdisplayshortskip}{2pt}
\setlength{\abovedisplayskip}{2pt} \setlength{\abovedisplayshortskip}{2pt}


    \vspace*{-12pt}
    {\let\newpage\relax\maketitle}
\thispagestyle{empty} 
\vspace{-2cm}
\begin{abstract}
\vspace{-0.5cm}
We present a comprehensive steady-state analysis of \textit{threshold-ALOHA}, a distributed age-aware modification of slotted ALOHA proposed in recent literature. In threshold-ALOHA, each terminal suspends its transmissions until the Age of Information (AoI) of the status update flow it is sending reaches a certain threshold $\Gamma$. Once the age exceeds $\Gamma$, the terminal attempts transmission with constant probability $\tau$ in each slot, as in standard slotted ALOHA. We analyze the time-average expected AoI attained by this policy, and explore its scaling with network size, $n$. We derive the probability distribution of the number of active users at steady state, and show that as network size increases the policy converges to one that runs slotted ALOHA with fewer sources: on average about one fifth of the users is active at any time. We obtain an expression for steady-state expected AoI and use this to optimize the parameters $\Gamma$ and $\tau$, resolving the conjectures in \cite{doga} by confirming that the optimal age threshold and transmission probability are $2.2n$ and $4.69/n$, respectively. We find that the optimal AoI scales with the network size as $1.4169n$, which is almost half the minimum AoI achievable with slotted ALOHA, while the loss from the maximum throughput of $e^{-1}$ remains below $1\%$. We compare the performance of this rudimentary algorithm to that of the SAT policy \cite{shirin} that dynamically adapts its transmission probabilities.
\end{abstract}

\vspace{-0.7cm}
\begin{IEEEkeywords}
\vspace{-0.5cm}
Slotted ALOHA, threshold-ALOHA, Age of Information, random access, thinning, stabilized ALOHA
\end{IEEEkeywords}

%
\IEEEpeerreviewmaketitle

\vspace{-1cm}
\section{Introduction}
	%
	%
	%
	%
	
	\par Age of Information (AoI) emerged almost a decade ago \cite{Altman2010,kaul2012real} as a metric facilitating the characterization and control of information freshness in status-update based networked systems, including Internet-of-Things (IoT) and  Machine-type Communications (MTC) scenarios. Many classical networking formulations have since been revisited from an AoI analysis and optimization perspective~\cite{yates2018age,costa2014age,bacinoglu2015age,inoue2019general,huang2015optimizing,kosta2017age,kadota2016minimizing,yates2017timely}. The addressing of \textit{random access} with an AoI objective is relatively new~\cite{yates2017status}, particularly motivated by applications such as industrial automation, networked control systems, environmental monitoring, health and activity sensing, where multiple sensor nodes send updates of sensed data a common access point on a shared channel. 
	
	A series of recent works~\cite{yates2017status,doga,shirin,jiang2018timely,chen2020age} studied basic abstractions that capture the essence of information aging in this random access environment: (1) time is slotted and nodes are synchronized to the slot timing, (2) concurrent transmissions result in packet loss, (3) nodes make distributed transmission decisions, (4) the longer it takes a node to successfully transmit a packet, the more its corresponding data flow ages. These four are the essential assumptions underlying the problem analyzed in this paper. 
	
	As a consequence of these  assumptions, in order to keep the time-average age in the network under control, the distributed decision mechanism needs to strike a balance between  each node attempting transmission sufficiently often, and more than one transmission attempts at a time being unlikely. This problem is related to the classical problem of distributed stabilization of slotted ALOHA (see, \eg,~\cite{tsitsiklis86}), revisited here through the lens of AoI, which is a fundamentally different performance objective. Throughput optimality and age optimality in channel access scheduling often do not coincide \cite{kadota2016minimizing}- a throughput optimal mechanism can be arbitrarily poor in terms of average AoI, however, age optimality requires high throughput, and is often attained at an operating point that is nearly throughput-optimal, an example of which we will demonstrate in this paper in the context of random access. In the rest, we first summarize the main contributions of this paper. Next, we briefly contrast our results with those in recent literature, to highlight the salient points of this work with respect to other related works. This will be followed by the system model, the analysis, numerical examples and conclusions.
	
	\vspace{-0.5cm}
	\subsection{Main Contributions}
	\par This paper builds on the model in \cite{doga} and provides a detailed steady-state analysis of the threshold-based slotted ALOHA policy introduced therein (called the Lazy Policy, in \cite{doga}). This basic policy, which we will refer to as \textbf{threshold-ALOHA} (TA) in the rest of this paper, and precisely describe in Section \ref{sect:systemmodel}, differs from ordinary slotted ALOHA only in that users back-off for a deterministic amount of time (an \textit{age threshold}) after a successful transmission. 
	\par Our main contributions are the following:
	\begin{itemize}
		\item 
		We provide the steady-state solution of the discrete-time Markov Chain (DTMC) defined in \cite{doga} and derive PMF of the number of active users for any network size (Lemma \ref{lemma:truncated}).
		\item
		We show, in the limit of an infinitely large network, that the number of active sources is independent of the state of a particular source and use this to establish the limiting probability, $q_o$, of a successful transmission at steady state (Corollary 1). 
		\item 
		We analyze the behavior of threshold-ALOHA in a large network and show that the policy converges to a slotted ALOHA policy with fewer users (Theorems \ref{thm:1} and \ref{thm:2}), as the number of users grows. This limiting behavior is similar to Rivest's stabilized slotted ALOHA, or the age-thinning policy introduced in \cite{shirin}, albeit with much lower computational complexity.
		\item 
		We derive an expression relating the time average AoI in TA to the network size $n$, the transmission threshold $\Gamma$, and the transmission probability $\tau$, and show that the optimal time average AoI scales with $n$ as $1.4169n$ (Thm \ref{thm:ageResult}), which is close to half the minimum value $en$ achieveable by ordinary slotted ALOHA  \cite{yates2017status}. Moreover, at this AoI-optimized operating point, the loss in throughout is below $1\%$ w.r.t. the maximum achievable throughput.
		\item
		We extend our analysis to a model with exogenous packet arrival process and show that the optimal time-average AoI is asymptotically same as in the original model at $1.4169n$, as long as arrivals are sufficiently frequent, i.e. $\lim_{n \to \infty} n \lambda_i = \infty$.
	\end{itemize}
	
	\vspace{-0.5cm}
	\subsection{Related Work}
There have been previous studies of AoI optimization in scheduled access  \cite{jiang2018timely,sun2019whittle,kosta2019age,jiang2019distributed}. MaxWeight  type  strategies  where  the  transmission  probabilities depend on age ~\cite{shirin,kadota2016minimizing,kadotajournal} and CSMA-type policies  \cite{maatouk2019minimizing,bedewy2020optimal} have also been studied.
	
Stationary and distributed policies where new packets are generated \enquote{at will} (whereby nodes generate a new sample when they decide to transmit) were considered in \cite{talak2018distributed,yates2017status,munari2020irsa}. A pioneering study of age in random access \cite{yates2017status} bounded the age performance of slotted ALOHA: the time average age achievable by slotted ALOHA in a large network of symmetric nodes is a factor of $2e$ away from an ideal round-robin allocation. In \cite{munari2020irsa}, an AoI expression was derived considering up to a certain number of retransmissions of the same packet, in a network using slotted ALOHA. 
	 

Aside from the above, the two recent studies \cite{shirin} and \cite{chen2020age} stand out as closely related to our work. Below, we clarify the contributions in this paper in the light of these two related studies:
	
\paragraph{\textbf{Comparison to \cite{chen2020age}}} An independent analysis of threshold-ALOHA  was carried out in \cite{chen2020age}, and the results were supported by hardware experiments in \cite{chen2020implementation}. The analysis in \cite{chen2020age}, however, is based on an approximation that the states of the sources are independent of each other. This stands in contrast to the results of our steady-state analysis (Lemma \ref{lemma:f(k)}) which identifies a strong dependence between the states of the sources through the number of active sources in the system. Moreover, the analysis in \cite{chen2020age} was limited to the case of the transmission probability, $\tau$, being below $\frac{2}{n}$, which, as we show in this paper, is quite far from the optimal choice of the transmission probability, $\frac{4.69}{n}$. This is consistent with the simulation results presented in \cite{chen2020age} that indicate an AoI ($2n$, or $1.6n$ in two different simulation plots), which are above the optimal value of $1.41n$. 

\paragraph{\textbf{Comparison to the SAT policy in \cite{shirin}}} Akin to threshold-ALOHA (TA), SAT dictates that users stay silent before their ages reach a fixed threshold. However, unlike TA, the probability of an active user making a transmission is not fixed. Each user computes its transmission probability according to its estimate of the number of active users. Users keep their estimates up-to-date by staying in receive mode to detect collisions, even when they are not active. In TA, on the other hand, users need to listen for ACK/NACK feedback only after their own transmission attempts, which would allow them to go to an idle or sleep mode when they are inactive. This may lead to a major difference between the power consumption needed to implement each policy. As it can be seen in Table 1, the number of users in receive mode in each slot increases linearly with $n$ in the SAT policy, whereas it is constant for TA. The constant value of 0.9 originates from the function $G$ described in Table 2, defined as the average number of users that make a transmission attempt per time slot. At moderate transmission radii (\eg, below $100$ meters), typical in IoT and sensor networks, the power consumption in receive mode is comparable to that in transmit mode. Therefore, as network density increases, the Rx energy consumption is likely to be dominant \cite{instruments2007cc2420}. This suggests that TA may be more suitable to dense IoT deployments with energy constrained nodes.

\vspace{-0.8cm}
\begin{center}
\begin{displaymath}
\begin{array}{|l|c|c|c|l|c|c|}\cline{1-3} \cline{5-7}
\, & \textrm{TA} & \textrm{SAT \cite{shirin}}&\quad&\, & \textrm{TA} & \textrm{SAT \cite{shirin}} \\ \cline{1-3} \cline{5-7} 
\textrm{Tx Mode} & 0.9 & e^{-1}&\,&\textrm{Rx Mode} & 0.9 & n \\ \cline{1-3} \cline{5-7}
\end{array}
\end{displaymath}
\vspace{0 cm}
\captionof{table}{Comparison of the expected number of users in Tx and Rx modes in a time slot during steady-state in threshold-ALOHA and SAT \cite{shirin} under optimal parameters.} 
\end{center}
The extensive analysis in \cite{shirin} has shown that with SAT the average AoI scales as $\frac{e}{2}n$ ($1.3591n$). We exhibit in this paper that TA is able to achieve a scaling of $1.4169n$. In other words, SAT asymptotically achieves a $4\%$ age advantage over TA. In terms of throughput, the two fare closely: both policies achieve a throughput close to the slotted ALOHA limit, which is around $e^{-1}$. We finally remark that the $4\%$ advantage achieved by SAT comes at a cost of a considerably increased feedback requirement, power consumption and computational complexity.

The rest of the paper is organized as follows: Sec. \ref{sect:systemmodel} presents the system model. Sec. \ref{sect:steadyState} contains the steady state solution DTMC defined in \cite{doga}. Sec. \ref{sect:pivotMC} analyzes the system in the large network limit. Sections \ref{sect:rootAnalysis} and \ref{sect:2peak} characterize the two possible steady-state behaviors of the policy. Section \ref{sect:ageCalc} presents the AoI expression and its optimization. In sec. \ref{sect:arrival}, an extension is made to the case of exogenous arrivals. Sec. \ref{sect:numerical} provides simulation results that illustrate the performance of TA in comparison with several related policies. We conclude in sec. \ref{sect:conclusion} by summarizing our contributions and discussing future directions.
\vspace{-0.7cm}
\section{System Model} \label{sect:systemmodel}
We consider a wireless network containing $n$ sources (alternatively, users) and a common access point (AP). The sources wish to send occasional status updates to their (possibly remote) destinations reached through the AP. Nodes are synchronized with a common time reference (obtained through a control channel), and there is a slotted time-frame structure. We adopt the \enquote{generate-at-will} model~\cite{SunIT2017} such that each source that decides to transmit generates a fresh sample just before transmission (An extension to exogenous arrivals is made in Section \ref{sect:arrival}). We disallow collision resolution, such that if two or more users attempt transmission in the same slot, all transmitted packets are lost. There are no re-transmissions. When a failed source attempts transmission again, it will generate a new packet. If there is no collision, the transmission of the packet is successfully completed within a single time slot. 

For simplicity, we will have each source generate a single data flow. The Age of Information (AoI) of user $i\in\{1,\ldots,n\}$ (equivalently, that of flow $i$) at time slot $t$,  $A_i[t]$, is defined as the number of time slots that have elapsed since the freshest packet of this flow thus far received by the AP was generated. Due to the generate-at-will model we imposed, $A_i[t]$ is equal to the number of slots since the most recent successful transmission of source $i$, plus one. In the case of a successful transmission, the sender receives a 1-bit acknowledgement (possibly piggybacked on a back-channel packet.), and resets the age of its flow to $1$. Accordingly, the age process $\{A_i[t], t=1, 2, \ldots\}$ evolves as:
\begin{equation}
    A_{i}[t] = \left\{
    \begin{array}
    {ll}1, & \text { source } i \text { transmits successfully at time slot } t-1 \\
    A_{i}[t-1]+1, & \text { otherwise }
    \end{array}
    \right.
\end{equation}
The long term average AoI of source $i$ is defined as:
\begin{equation} \label{eq:2}
\Delta_{i}=\lim_{T \rightarrow \infty} \frac{1}{T} \sum_{t=0}^{T-1} A_{i}[t]
\end{equation}
on each sample path where the limit exists. Next, we define the threshold-ALOHA policy.
\vspace{-0.5cm}
\section{Problem Definition and Analysis}
\par In slotted ALOHA, users initiate transmission attempts with a fixed probability $\tau$ in each time slot. When buffering and re-transmissions are allowed, this algorithm is unstable. Stabilization can be achieved through modification of the probability $\tau$ according to the state of the network, which is often inferred through feedback about successful transmission. In the same vein, feedback about successful transmissions can be used by each source to determine its instantaneous age. In \cite{doga}, a simple modification of slotted ALOHA was proposed, which we shall refer to as \textbf{threshold-ALOHA} in the rest of this paper. (This algorithm was called Lazy Policy in \cite{doga}, we modify the name here to one that may be more descriptive of the nature of the policy.)

\par Threshold-ALOHA is a simple age-aware extension of slotted ALOHA: sources will wait until their age reaches a certain threshold $\Gamma$, before they turn on their slotted ALOHA mechanism, and only then start to attempt transmission with a fixed probability $\tau$ at each time slot. Hence, sources, who have successfully sent an update not more than $\Gamma-1$ time slots ago, stay idle and allow others with larger ages contend for the channel. It was numerically observed, without proof, in \cite{doga} that this policy is an improvement over slotted ALOHA in the sense that it achieves around half the long term average age achieved by regular slotted ALOHA, without significantly compromising network throughput. Furthermore, it was hypothesized that the optimal threshold scales with the network size as $\Gamma=2.2n$. These will be confirmed to be essentially correct, as part of the results of our precise analysis of the various convergence modes of this policy.

\par From the above description of threshold-ALOHA, it is clear that the decision of each source at time slot $t$ is determined by its age at the beginning of this time slot: if the age is below threshold, the node will stay idle, and if not, it will transmit with probability $\tau$. In \cite{doga} it was established that the age vector of the sources can be used to denote the state of the network, and for any value of $n$, this state evolves as a Markov Chain (MC):
\begin{equation}
    \mathbf{A}[t] \triangleq\left\langle A_{1}[t] \quad A_{2}[t] \quad \ldots \quad A_{n}[t]\right\rangle
\end{equation}

It was also shown in \cite{doga} that for the purpose of age analysis, it suffices to consider a truncated version of this MC, which constitutes a Finite State Markov Chain (FSMC), with a unique steady-state distribution. The truncated model is based on the observation that once the age of a source exceeds $\Gamma$, it becomes an  \textit{active} source, and its behavior remains same regardless of how much further its age increases. In most of the remainder of our analysis, unless stated otherwise, the ages of active sources will be truncated at $\Gamma$. Due to the ergodicity of the FSMC, and due to the symmetry between the users, the time average AoI (\ref{eq:2}) of each user can be found by computing the expectation over the steady-state distribution of the age, which is equal for all $i$:
\begin{equation}
    \Delta_i=\lim_{t \to \infty} \mathbb{E} \left[A_{i}[t]\right]
\end{equation}
In the rest, we explore this steady-state distribution and exploit its asymptotic characteristics.
\vspace{-0.5cm}
\subsection{Steady State Solution} \label{sect:steadyState}
\par As in \cite{doga}, we define the truncated state vector:
\begin{equation}
\mathbf{A}^{\Gamma}[t] \triangleq
\left\langle A_{1}^{\Gamma}[t] \quad A_{2}^{\Gamma}[t] \quad \ldots \quad A_{n}^{\Gamma}[t]\right\rangle
\end{equation}
where $A_{i}^{\Gamma}[t] \in \{1,2,\ldots,\Gamma\}$ is the Aol of source {\it i} at time {\it t} $\in \mathbb{Z}^+$ truncated at $\Gamma$ and evolves as:
\begin{equation}
A_{i}^{\Gamma}[t]=\left\{\begin{array}{ll}1, & \text { source } i \text { updates at time $t-1$, } \\ \min \left\{A_{i}^{\Gamma}[t]+1,\right.  \Gamma\}, &\text { otherwise. }\end{array}\right.
\end{equation}
The resulting state space is $\mathcal{S} = \{1,2,\ldots,\Gamma\}^n$. As shown in \cite{doga}, $\{\textbf{A}^{\Gamma}[t], t\geq 1\}$ is a finite state Markov Chain (MC) with a unique steady state distribution.
We first describe the recurrent class.
\begin{prop}
\label{prop:recurrentClass}
	If a state $\left\langle s_{1} \quad s_{2} \quad \ldots \quad s_{n}\right\rangle$ in the truncated MC $\{\textbf{A}^{\Gamma}[t], t\geq 1\}$ is recurrent, then for distinct indices i and j, $s_i = s_j$ if and only if $s_i = s_j = \Gamma$.
\end{prop}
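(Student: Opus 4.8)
The plan is to prove the two implications separately. The reverse direction is immediate: if $s_i = s_j = \Gamma$ then trivially $s_i = s_j$, so nothing is required. The entire content lies in the forward direction, which I would restate in the equivalent form: in a recurrent state no two distinct sources can share a \emph{sub-threshold} age, \ie, if $s_i = s_j$ with $i \neq j$ then the common value must be $\Gamma$. To establish this I would exploit the fact that below the threshold the dynamics are purely deterministic: a source whose truncated age is $k$ with $1 \le k \le \Gamma-1$ is idle (it has not reached the transmission threshold), so it neither transmits nor resets, and by the update rule its age advances deterministically to $k+1$ in the next slot.

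The key step is a backward-tracing (conservation) lemma for sub-threshold occupancy. Let $N_k[t]$ denote the number of sources with truncated age exactly $k$ at time $t$. For $2 \le k \le \Gamma-1$, the only way a source can have age $k$ at time $t$ is to have had age $k-1$ at time $t-1$: a reset (update) produces age $1 \neq k$, and truncation produces only age $\Gamma \neq k$, so increment from $k-1$ is the sole mechanism; conversely every idle source at age $k-1$ advances to $k$. Hence $N_k[t] = N_{k-1}[t-1]$, and iterating this identity down to age $1$ gives $N_k[t] = N_1[t-k+1]$ whenever $t-k+1 \ge 1$. I would then observe that $N_1[t'] \le 1$ for every $t' \ge 2$: a source has age $1$ only if it transmitted successfully in the previous slot, and the collision rule (all concurrent transmissions are lost) permits at most one success per slot. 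Both the conservation identity and this bound are pathwise statements, holding on every sample path, since the sub-threshold evolution carries no randomness.

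Finally I would bring in recurrence. Since the state $\langle s_1 \cdots s_n\rangle$ is recurrent in the finite chain $\{\mathbf{A}^{\Gamma}[t]\}$, it is reachable from itself, hence reachable from itself in arbitrarily many steps by concatenating return paths; in particular, starting the chain in this state at time $1$, it occupies the state again at some time $t \ge \Gamma$ along a positive-probability sample path. Applying the lemma at this time gives, for every sub-threshold level $1 \le k \le \Gamma-1$, that $N_k[t] = N_1[t-k+1] \le 1$, where $t-k+1 \ge 2$ because $t \ge \Gamma$ and $k \le \Gamma-1$. Thus in the recurrent state each age value below $\Gamma$ is held by at most one source, which is exactly the claim that $s_i = s_j$ forces $s_i = s_j = \Gamma$.

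The main obstacle I anticipate is handling the interaction between truncation and the notion of recurrence cleanly. One must verify that ages strictly below $\Gamma$ admit a \emph{unique} deterministic predecessor (so that the conservation identity is an exact equality rather than a mere inequality), and that the invocation of recurrence genuinely supplies a visit at a time large enough ($t \ge \Gamma$) for the backward trace to terminate at a slot where the at-most-one-success bound applies. An equivalent and perhaps more structural route, which I would mention as an alternative, is to show directly that the set of states with pairwise-distinct sub-threshold ages is forward-invariant, and that from any state this set is entered within $\Gamma-1$ slots regardless of the random outcomes, so that every state outside it is transient; recurrent states then necessarily lie inside it.
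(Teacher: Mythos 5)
Your proposal is correct and follows essentially the same route as the paper: the core step in both is tracing a shared sub-threshold age $k<\Gamma$ backward through the deterministic idle dynamics to the slot where that age was $1$, where the collision rule permits at most one successful transmission, and then using recurrence (equivalently, transience of the offending initial states, which is exactly your mentioned alternative and is the paper's phrasing) to rule out duplicates inherited from the initial condition. Your occupancy-count identity $N_k[t]=N_1[t-k+1]$ is just a cleaner formalization of the paper's "such a pair implies two simultaneous transmissions at $t-s$" argument.
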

\begin{proof}
    Suppose at time $t > 1$, there exist two entries of the state vector that are equal to 1, i.e. there is a pair of sources $(i,j)$ such that $s_i = s_j = 1$. This would imply two simultaneous successful transmissions at $t-1$. However, this is impossible due to the assumption that colliding packets are lost. We extend this argument to cases where $s_i = s_j = s < \Gamma$ and $t > s$. The existence of such an $(i,j)$ pair implies two simultaneous transmissions at $t-s$. As this is impossible, such $(i,j)$ pairs cannot exist. Finally, if the system started in a state where there are two (or more) users that have the same age, $a<\Gamma$, at $t=1$, these ages will grow to $\Gamma$ in $\Gamma-a$ time slots after which they will be decoupled, because only one can get reset to $1$ at a time. Therefore, if the initial state of the MC is one that contains non-distinct below-threshold values, the chain will leave this state in at most $\Gamma$ time slots, and it will never return. This implies that such states are transient.
\end{proof}
According to Prop. \ref{prop:recurrentClass}, states where distinct users have equal below-threshold age are transient. So, without loss of generality, the steady-state analysis that follows will be limited to the remaining states, where $s_i = s_j$ if and only if $s_i = s_j = \Gamma$. It will later be proved that all the remaining states are recurrent, moreover, as there is a unique steady state (from \cite{doga}) those states are all in the same recurrent class in the truncated MC. So in the rest, we refer to the remaining states as recurrent states.

We define the \textit{type} of a recurrent state in the following way:
\begin{equation}
    T\langle s_{1} \quad s_{2} \quad \ldots \quad s_{n}\rangle = (M,\{u_{1}, u_{2}, \ldots, u_{n-M} \}),
\end{equation}
where $M$ is the number of entries equal to $\Gamma$ (\ie, the number of active sources), and the set $\{u_{1}, u_{2}, \ldots, u_{n-M} \}$ is the set of entries smaller than $\Gamma$ (\ie, the set of ages below the threshold). 
\begin{prop}
    States of the same type have equal steady state probabilities.
\end{prop}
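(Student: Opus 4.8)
The plan is to exploit the permutation symmetry of the Markov chain over the user indices. Since every source uses the same threshold $\Gamma$ and the same transmission probability $\tau$, and since the collision rule (a transmission succeeds if and only if exactly one active source transmits) treats all sources identically, the one-step transition probabilities are invariant under any relabeling of the users. Formally, letting a permutation $\pi \in S_n$ act on states by $\pi \cdot \langle s_1, \ldots, s_n \rangle = \langle s_{\pi^{-1}(1)}, \ldots, s_{\pi^{-1}(n)} \rangle$, I would first verify that $P(\mathbf{s} \to \mathbf{s}') = P(\pi \cdot \mathbf{s} \to \pi \cdot \mathbf{s}')$ for all states $\mathbf{s}, \mathbf{s}'$ and all $\pi$. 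This reduces to observing that the event governing the transition --- which active source (if any) transmits successfully --- has a probability depending only on the \emph{number} of active sources and not on their identities, and that the deterministic age-update rule commutes with relabeling.

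Next I would argue that this transition-level symmetry lifts to the stationary distribution. Let $\boldsymbol{\mu}$ denote the unique steady-state distribution guaranteed by the FSMC structure (from \cite{doga}). For any fixed $\sigma \in S_n$, define $\boldsymbol{\mu}^{\sigma}(\mathbf{s}) \triangleq \boldsymbol{\mu}(\sigma \cdot \mathbf{s})$. Using the invariance of $P$ under $\sigma$, a direct substitution into the global balance equations shows that $\boldsymbol{\mu}^{\sigma}$ is again a stationary distribution. By uniqueness, $\boldsymbol{\mu}^{\sigma} = \boldsymbol{\mu}$, hence $\boldsymbol{\mu}(\sigma \cdot \mathbf{s}) = \boldsymbol{\mu}(\mathbf{s})$ for every state $\mathbf{s}$ and every permutation $\sigma$; that is, states in the same $S_n$-orbit share a steady-state probability.

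Finally I would show that two recurrent states of the same type lie in a common orbit. If $\mathbf{s}$ and $\mathbf{s}'$ have the same type $(M, \{u_1, \ldots, u_{n-M}\})$, then each has exactly $M$ entries equal to $\Gamma$ and the identical set of below-threshold ages; since Proposition \ref{prop:recurrentClass} guarantees these below-threshold values are distinct, each value is carried by a unique source in either state. One can therefore construct a permutation $\pi$ mapping the source holding age $u_k$ in $\mathbf{s}$ to the source holding $u_k$ in $\mathbf{s}'$, matching up the active sources in any fashion; this $\pi$ satisfies $\pi \cdot \mathbf{s} = \mathbf{s}'$. Combining with the previous step yields $\boldsymbol{\mu}(\mathbf{s}) = \boldsymbol{\mu}(\mathbf{s}')$.

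The routine parts here are the orbit construction and the substitution into the balance equations; the step demanding the most care is the first one --- pinning down precisely that the transition kernel is permutation-invariant --- since this rests on the modeling assumptions (identical $\tau$ and $\Gamma$ across sources and an identity-blind collision rule) rather than on anything proved so far. Once that invariance is established cleanly, the remainder follows from the uniqueness of the stationary distribution.
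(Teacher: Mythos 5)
Your argument is correct and is precisely the formalization of what the paper's one-line proof (``follows from the symmetry between users'') leaves implicit: permutation-invariance of the transition kernel, lifting to the unique stationary distribution, and the observation that same-type states lie in one $S_n$-orbit. No substantive difference in approach.
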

\begin{proof}
Follows from the symmetry between users.
\end{proof}
Next, we further show that, for a given $M$, the set $\{u_{1}, \ldots, u_{n-M} \}$ has no effect on the steady state probability of a state. In other words, this probability is determined by $M$, the number of  active sources. This facilitates the derivation of the distribution of the number of active sources.
\begin{lemma} \label{lemma:truncated}
The truncated MC $\{\textbf{A}^{\Gamma}[t], t\geq 1\}$ has the following properties:
\begin{enumerate}[(i)]
  \item Given a state vector $\left\langle s_{1} \quad s_{2} \quad \ldots \quad s_{n}\right\rangle$, its steady state probability depends only on the number of entries that are equal to $\Gamma$.
  \item Let $P_m$ be the total steady state probability of states having $m$ active users. Then 
  $$ \frac{P_m}{P_{m-1}} = \frac{(1-(m-1)\tau(1-\tau)^{m-2})(n-m+1)}{m\tau(1-\tau)^{m-1}(\Gamma-1-n+m)}$$
  \item $P_m$ is explicitly given as (\ref{eq:p_mFormula}) for $m \geq 0$.
\end{enumerate}
\end{lemma}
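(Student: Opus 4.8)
The plan is to exploit the near-deterministic structure of the recurrent dynamics. By Proposition~\ref{prop:recurrentClass}, a recurrent state with $m$ active sources has exactly those $m$ entries equal to $\Gamma$ and assigns the remaining $n-m$ sources \emph{distinct} below-threshold ages, i.e.\ a set $U\subseteq\{1,\dots,\Gamma-1\}$ with $|U|=n-m$. Since states differing only by a permutation of sources have equal stationary probability (symmetry between users) and the number of such states, $n!/m!$, depends only on $m$, it suffices to work with the lumped chain on \emph{types} $(m,U)$ and to show its stationary probability $g(U)$ depends only on $|U|$. I would first record the type dynamics: every below-threshold age advances by one, so $U\mapsto(U+1)\setminus\{\Gamma\}$, and with probability $p_s(m):=m\tau(1-\tau)^{m-1}$ (exactly one of the $m$ active sources transmits alone) a fresh age $1$ is injected, giving $U\mapsto\big((U+1)\setminus\{\Gamma\}\big)\cup\{1\}$. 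By symmetry the identity of the successful source is irrelevant, so the type chain is genuinely Markov with these two branches.

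The obstacle is that this chain is \emph{not} lumpable down to the single variable $m$: the next count of active sources is $m-1+b$ on a success and $m+b$ otherwise, where $b=\mathbf{1}[\Gamma-1\in U]$ flags a below-threshold source poised to cross the threshold, and $b$ is not a function of $m$. I would therefore prove (i) by guess-and-verify, positing $g(U)=g(|U|)$ and checking global balance at an arbitrary target type $V$, $|V|=k$. Splitting on whether $1\in V$ pins down the predecessors exactly: if $1\notin V$ the only incoming transitions are ``no-success'' from $V-1$ (with $b=0$) and from $(V-1)\cup\{\Gamma-1\}$ (with $b=1$); if $1\in V$ the only incoming transitions are ``success'' from $(V\setminus\{1\})-1$ and from $((V\setminus\{1\})-1)\cup\{\Gamma-1\}$. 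The crux is that, after substituting $g(U)=g(|U|)$, \emph{both} balance equations collapse to the single first-order recursion
\begin{equation*}
g(k)\,p_s(n-k)=g(k+1)\,\big(1-p_s(n-k-1)\big),
\end{equation*}
the two cases differing only by an index shift. Because the truncated FSMC is ergodic with a unique stationary law (from \cite{doga}), producing one solution of this form proves it \emph{is} that law, establishing (i). I expect this collapse --- verifying that the success and no-success families reduce to the same recursion independently of the boundary flag $b$ --- to be the main technical step; locating predecessors and handling the extreme values of $m$ (where one predecessor type is empty) is routine bookkeeping.

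Part (ii) is then a short translation. There are $\binom{\Gamma-1}{n-m}$ below-threshold sets of size $n-m$, so (i) gives $P_m=\binom{\Gamma-1}{n-m}\,g(n-m)$. Writing $k=n-m$, the recursion yields $g(n-m)/g(n-m+1)=(1-p_s(m-1))/p_s(m)$, while the binomial factor contributes $\binom{\Gamma-1}{n-m}/\binom{\Gamma-1}{n-m+1}=(n-m+1)/(\Gamma-1-n+m)$. Multiplying the two and inserting $p_s(m)=m\tau(1-\tau)^{m-1}$ reproduces exactly the claimed ratio for $P_m/P_{m-1}$.

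Finally, for (iii) I would telescope this ratio. Starting from the smallest admissible count $m_{\min}=\max\{0,\,n-\Gamma+1\}$ (below which the $n-m$ distinct ages can no longer be packed into $\{1,\dots,\Gamma-1\}$) and taking the product of the part-(ii) ratios up to $m$ gives $P_m$ up to a multiplicative constant; fixing that constant by the normalization $\sum_m P_m=1$ yields the closed form~(\ref{eq:p_mFormula}).
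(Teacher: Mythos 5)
Your proposal is correct and takes essentially the same route as the paper's proof: both identify the (at most two) predecessor types of a target state according to whether it contains an age-$1$ entry, verify by guess-and-check that the stationary probability depends only on the number of active sources because both balance equations collapse to the same one-step recursion (your $g(k)\,p_s(n-k)=g(k+1)(1-p_s(n-k-1))$ is exactly the paper's (\ref{eq:piM}) after dividing out the $n!/m!$ permutation count), and then obtain $P_m$ by multiplying by the number of states with $m$ active sources and normalizing. The only differences are presentational: you lump over source permutations first and track the set $U$ of below-threshold ages, while the paper works with individual states and absorbs the resulting multiplicities (the factors $M$ and $M-1$) directly into its balance equations (\ref{eq:10})--(\ref{eq:11}).
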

\begin{proof}

\par First, suppose that the given state vector has no entry equal to 1. Let the type of this state vector be $\mathcal{T}_1 \triangleq (M,\{u_{1}, u_{2}, \ldots, u_{n-M} \})$, where $M \in \{0,1,\ldots,n\}$ is the number of entries equal to $\Gamma$ and $u_i > 1, i = 1,2,\ldots,u-M$. As there is no source whose age is 1 at the current time, $t$, there has been no successful transmission in the previous time slot, $t-1$. Hence, the number of active users at $t-1$ cannot have been $M+1$ or larger. So the state at $t-1$ must be one of the following types:
\begin{itemize}
	\item 
	$\mathcal{T}_2 \triangleq (M,\{u_{1}-1, u_{2}-1, \ldots, u_{n-M}-1 \})$
	\item
	$\mathcal{T}_3 \triangleq (M-1,\{\Gamma-1, u_{1}-1, u_{2}-1, \ldots, u_{n-M}-1 \})$
\end{itemize}
\par If, on the other hand, there was a successful transmission whilst in types $\mathcal{T}_2$ and $\mathcal{T}_3$, the resulting state would have been of type $\mathcal{T}_0 \triangleq   (M-1,\{u_{1}, u_{2}, \ldots, u_{n-M},1 \})$.
\par Alternatively, if the given state vector has an entry that is equal to 1 at current time, $t$, it indicates a successful transmission at $t-1$. In this case, the given state vector is of type $\mathcal{T}_0$ and the state at $t-1$ must be of types $\mathcal{T}_2$ or $\mathcal{T}_3$, as defined above. 

\par Let $\mathcal{C}_t$ be the set of states that are of type $\mathcal{T}_0$ or type $\mathcal{T}_1$. Let $\mathcal{C}_{t-1}$ be the set of states that are of type $\mathcal{T}_2$ or type $\mathcal{T}_3$. If the system is in a state that is in $\mathcal{C}_{t}$ at time t, then its state at time $(t-1)$ must be in $\mathcal{C}_{t-1}$. This follows from the fact that there can be at most 1 transmission at each time slot and due to Prop. \ref{prop:recurrentClass} all source states except $\Gamma$ are unique. Similarly, if the system is in a state that is in $\mathcal{C}_{t-1}$ at time $(t-1)$, then its state at time $t$ must be in $\mathcal{C}_{t}$. 
\par Any given state of type $\mathcal{T}_2$ evolves into a  state of type $\mathcal{T}_0$ with probability $M\tau(1-\tau)^{M-1}$ and into a state of type $\mathcal{T}_1$ with probability $1-M\tau(1-\tau)^{M-1}$. A state of type $\mathcal{T}_3$ evolves into a state of type $\mathcal{T}_0$ with probability $(M-1)\tau(1-\tau)^{M-2}$ and into a state of type $\mathcal{T}_1$ with probability $1-(M-1)\tau(1-\tau)^{M-2}$. Let $\pi_{\mathcal{T}_j}$ be the steady state probability of a single state of type $\mathcal{T}_j$. By the arguments above, the steady-state probabilities are related to each other by the following equations:
\begin{equation} \label{eq:10}
\pi_{\mathcal{T}_1} = \pi_{\mathcal{T}_2}(1-M\tau(1-\tau)^{M-1}) +  \pi_{\mathcal{T}_3}M(1-(M-1)\tau(1-\tau)^{M-2})
\end{equation}
\begin{equation} \label{eq:11}
\pi_{\mathcal{T}_0} = \pi_{\mathcal{T}_2}\tau(1-\tau)^{M-1} +  \pi_{\mathcal{T}_3}(M-1)\tau(1-\tau)^{M-2}
\end{equation}

	As $\textbf{A}^{\Gamma}$ has a unique steady state, a solution set satisfying the above steady state equations shall yield the steady state probabilities.	As (\ref{eq:10}) and (\ref{eq:11}) stand for all the incoming and outgoing transition probabilities of all recurrent states, this set of equations fully describes the steady state probabilities. Part (\textit{i}) of our claim can be tested by assigning $\pi_m$ as the steady state probabilities of system states that have $m$ sources at state $\Gamma$. Noting that $\pi_{\mathcal{T}_1}=\pi_{\mathcal{T}_2}=\pi_M$ and $\pi_{\mathcal{T}_0}=\pi_{\mathcal{T}_3}=\pi_{M-1}$, with appropriate substitutions (\ref{eq:10}) becomes:
	\begin{equation}
	\pi_M = \pi_{M}(1-M\tau(1-\tau)^{M-1}) + \pi_{M-1}M(1-(M-1)\tau(1-\tau)^{M-2}),
	\end{equation}
	and (\ref{eq:11}) becomes:
	\begin{equation}
	\pi_{M-1} = \pi_{M}\tau(1-\tau)^{M-1} + \pi_{M-1}(M-1)\tau(1-\tau)^{M-2}.
	\end{equation}
	Both of these equations are reduced to the same equation below that holds for all $m$:
	\begin{equation} \label{eq:piM}
	\frac{\pi_{m}}{\pi_{m-1}} = \frac{1-(m-1)\tau(1-\tau)^{m-2}}{\tau(1-\tau)^{m-1}}.
	\end{equation}
	Therefore, part (\textit{i}) holds and this can be used to calculate the steady state probability of having $m$ active users. The total number of states corresponding to $\pi_m$ are the number of recurrent system states with $m$ sources at truncated age $\Gamma$:
	\begin{equation}
	N_m = \binom{n}{m}\frac{(\Gamma-1)!}{(\Gamma-n-1+m)!} 
	\end{equation}
	Recall that $P_m$ was defined as the total probability of all states with m active sources. By Lemma \ref{lemma:truncated} (i), each of these states are equiprobable with steady state probability $\pi_m$. Hence,
	\begin{equation} \label{eq:16}
	P_m = N_{m}\pi_m
	\end{equation}
	\begin{equation}  \label{eq:PmRatio}
	\frac{P_m}{P_{m-1}} = \frac{(1-(m-1)\tau(1-\tau)^{m-2})(n-m+1)}{\tau(1-\tau)^{m-1}m(\Gamma-1-n+m)}
	\end{equation}
	\begin{equation} \label{eq:18}
	\sum_{m=0}^{N} P_m = 1
	\end{equation}
	
	From (\ref{eq:PmRatio}) and (\ref{eq:18}),
	\begin{equation}
	P_0 = \frac{1}{1+\sum_{m=1}^{n}\prod_{i=1}^{m}  \frac{(1-(i-1)\tau(1-\tau)^{i-2})(n-i+1)}{i\tau(1-\tau)^{i-1}(\Gamma-1-n+i)}}
	\end{equation}
	\begin{equation} \label{eq:p_mFormula}
	P_m = P_0 \prod_{i=1}^{m}
	\frac{(1-(i-1)\tau(1-\tau)^{i-2})(n-i+1)}{\tau(1-\tau)^{i-1}i(\Gamma-1-n+i)}
	\end{equation}
	provides the steady state solution.
\end{proof}
\vspace{-0.5cm}
\subsection{Pivoted MC} \label{sect:pivotMC}

In this part, we make our analysis over a single source, which we refer to as the \textit{pivot} source. Any source in the network can be selected as pivot. After selecting a source a pivot, we modify the truncated MC of previous subsection, $\{\textbf{A}^{\Gamma}[t], t\geq 1\}$, to create \textit{pivoted MC} $ \{\textbf{P}^{\Gamma}[t], t\geq 1\}$, where the states of all the sources except the pivot are truncated at $\Gamma$.
\par We extend our definitions and arguments from the proof of Lemma 1 to $\textbf{P}^{\Gamma}$, in particular extend the definition of types of states. The \textit{type} of a state in $\textbf{P}^{\Gamma}$ is defined as:
\begin{equation}
    \textrm{T}^{\textbf{\textrm{P}}}\langle S^{\textbf{P}}\rangle \triangleq (s,M,\{u_{1}, u_{2}, \ldots, u_{n-M-1} \})
\end{equation}
where $s \in \mathbb{Z}^+$ is the state of the pivot source, $M$ is the number of entries equal to $\Gamma$ (\ie, the number of active sources not including the pivot), and the set $\{u_{1}, u_{2}, \ldots, u_{n-M-1} \}$ is the set of entries smaller than $\Gamma$ (\ie, the set of ages below the threshold, not including $s$). With a slight abuse of notation, we will refer to such a state as \textit{type $M$-state} where it is clear from the context.
\begin{prop} \label{prop:pivot}
\begin{enumerate} [(i)]
    \item $\textbf{P}^{\Gamma}$ has a unique steady state distribution.
    \item Steady state probability of a type-$m$ state in $\textbf{P}^{\Gamma}$ is equal to $\pi_m$, obeying (\ref{eq:piM}), if $s\in\{1,2,\ldots,\Gamma-1\}$.
\end{enumerate}
\end{prop}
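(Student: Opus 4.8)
The plan is to regard the pivoted chain $\mathbf{P}^{\Gamma}$ as an \enquote{unfolding} of the finite chain $\mathbf{A}^{\Gamma}$: the two chains are identical except that the pivot's active state $\Gamma$ in $\mathbf{A}^{\Gamma}$ is split, in $\mathbf{P}^{\Gamma}$, into the infinite ladder of exact ages $\Gamma, \Gamma+1, \Gamma+2, \ldots$. Conversely, $\mathbf{A}^{\Gamma}$ is recovered from $\mathbf{P}^{\Gamma}$ by merging (\emph{lumping}) every pivot age $\geq \Gamma$ into a single active state and leaving all other coordinates untouched. I would prove part (i) by an irreducibility-plus-positive-recurrence argument for the countable-state chain, and then obtain part (ii) almost for free from the exactness of this lumping, together with the uniqueness already established for $\mathbf{A}^{\Gamma}$.

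For part (i), I would first identify the recurrent class exactly as in Prop.~\ref{prop:recurrentClass}: recurrent states are those in which the below-threshold ages of the non-pivot sources are distinct, since two equal below-threshold ages would force two simultaneous successes. On this class the chain is irreducible, because from any recurrent state one can drive all ages up and then reset them one at a time, in any desired order, by an explicit finite sequence of no-transmission / single-success slots, each of positive probability. The one genuinely new feature is that the state space is now countably infinite through the pivot coordinate, so existence of a stationary distribution is not automatic. I would settle positive recurrence with a Foster--Lyapunov argument using $V = s$, the exact pivot age: whenever the pivot is active ($s \geq \Gamma$) it resets to $1$ in the current slot with probability at least $p_0 \triangleq \tau(1-\tau)^{n-1} > 0$ (a uniform lower bound, since at most $n-1$ other sources can be active), so the one-step drift satisfies $\mathbb{E}[\,V' - V \mid s\,] \leq 1 - p_0\, s$, which is negative outside the finite set $\{s \leq \max(\Gamma, \lceil 2/p_0\rceil)\}$ (finite because the remaining coordinates live in a finite space). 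Foster's criterion then yields positive recurrence, and irreducibility together with positive recurrence gives a unique stationary distribution, proving (i). Uniqueness requires only these two properties; aperiodicity, needed if one additionally wants convergence, is easily checked.

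For part (ii), I would verify that the partition above is a \emph{strong} (exact) lumping. Two states lie in the same block iff they have identical non-pivot configurations and either identical pivot age below $\Gamma$ (singleton blocks) or both pivot ages $\geq \Gamma$ (the merged block). The only nontrivial check is for the merged block: for pivot ages $s, s' \geq \Gamma$ sharing a non-pivot configuration, the one-step transition probability into any target block is the same, because the pivot influences both the rest of the network and its own successor block only through the event \enquote{pivot transmits,} which is Bernoulli($\tau$) regardless of the exact value $s \geq \Gamma$. Hence the lumped process is Markov and its transition matrix is precisely that of $\mathbf{A}^{\Gamma}$. Summing the stationary distribution of $\mathbf{P}^{\Gamma}$ (which exists and is unique by (i)) over blocks therefore yields a stationary distribution for $\mathbf{A}^{\Gamma}$, and by the uniqueness of the latter it must coincide with the distribution of Lemma~\ref{lemma:truncated}. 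For the singleton blocks---exactly the type-$m$ states with $s \in \{1,\ldots,\Gamma-1\}$---the block probability equals the state probability, hence equals the $\mathbf{A}^{\Gamma}$-probability of the corresponding state with $m$ entries equal to $\Gamma$ (the pivot being inactive), namely $\pi_m$ obeying (\ref{eq:piM}). This is exactly claim (ii).

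The main obstacle is part (i)---specifically, ruling out escape of mass to infinity along the pivot coordinate; the uniform positive reset probability $p_0$ is the crux that makes the drift argument go through. Everything else is structural and mirrors Prop.~\ref{prop:recurrentClass} and the transition analysis already carried out in the proof of Lemma~\ref{lemma:truncated}. An alternative, more computational route to (ii) would bypass lumpability and instead extend the balance equations (\ref{eq:10})--(\ref{eq:11}) to $\mathbf{P}^{\Gamma}$, verifying directly that assigning $\pi_m$ to the $s<\Gamma$ states solves them; but the lumping argument is cleaner and isolates precisely where the infinite state space enters.
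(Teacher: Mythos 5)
Your proposal is correct, and for part (i) it takes a genuinely different route from the paper. The paper establishes existence of the steady-state probabilities for states with pivot age $s \geq \Gamma$ by constructing, for each $s$, a separate \emph{finite} augmented chain $\textbf{A}^{s,\Gamma}$ in which the pivot is truncated at $s+1$; it shows an ``unlucky'' state (pivot at $s+1$, all others at $\Gamma$) is reachable from every state, so each finite chain has a unique steady state, and then transfers these probabilities to $\textbf{P}^{\Gamma}$ via the one-to-one correspondence of states with pivot age exactly $s$. You instead work directly on the countable-state chain and settle positive recurrence with a Foster--Lyapunov drift argument driven by the uniform reset probability $p_0 = \tau(1-\tau)^{n-1}$; your drift bound $\mathbb{E}[V'-V\mid s]\le 1-p_0 s$ is correct. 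Your route has the advantage of explicitly ruling out escape of mass to infinity (i.e., distinguishing positive recurrence from null recurrence or transience), a point the paper's sequence-of-finite-truncations argument leaves implicit (it is rescued there by the same geometric decay in $s$ that powers your drift bound, but this is never spelled out); the paper's route has the advantage of reusing the finite-chain uniqueness machinery already in place from \cite{doga}. For part (ii) the two arguments are essentially the same: the paper asserts that states with $s<\Gamma$ in $\textbf{P}^{\Gamma}$ and in $\textbf{A}^{\Gamma}$ describe the same event and hence carry the same probability, which is precisely your observation that these states form singleton blocks of an exact lumping whose quotient chain is $\textbf{A}^{\Gamma}$; your explicit verification of strong lumpability for the merged blocks (transition probabilities out of pivot age $s\geq\Gamma$ depend on $s$ only through the Bernoulli$(\tau)$ transmission event) makes rigorous what the paper states by construction.
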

\begin{proof}
\par States in $\textbf{P}^{\Gamma}$ where $s=1,2,\dots,\Gamma-1$ have one-to-one correspondence with the related states in the truncated MC $\textbf{A}^{\Gamma}$. The system visiting these corresponding states in $\textbf{P}^{\Gamma}$ and $\textbf{A}^{\Gamma}$ constitutes the same event hence these have identical steady state probabilities and identical transition probabilities, by construction. Therefore, they follow  (\ref{eq:piM}).
\par Next, we shall establish the existence of a steady state probability for the states in $\textbf{P}^{\Gamma}$ for which $s \geq \Gamma$. For a given $s$, we augment $\textbf{A}^\Gamma$ to form the \textit{augmented truncated MC} $ \{\textbf{A}^{s,\Gamma}[t], t\geq 1\}$ where the pivot is truncated at $s+1$ and all other sources are truncated at $\Gamma$. Truncation of the pivot source is illustrated in Fig. \ref{fig:augmented}. Let us the call the state where the state of the pivot source is $s+1$ and state of all other sources is $\Gamma$ the \textit{unlucky} state. The unlucky state can be reached by all the states in the MC, including the unlucky state itself, if there are no successful transmissions in the network for $s$ consecutive time slots, which can happen with non-zero probability. This means that there is a single recurrent class in this MC and a unique steady state distribution. Finally, there is a one-to-one correspondence between the states of $\textbf{A}^{s,\Gamma}$ and $\textbf{P}^{\Gamma}$ for which the state of the pivot source is $s$. Existence of steady state probabilities for the states in $\textbf{A}^{s,\Gamma}$ entails the existence of steady state probabilities for the states in $\textbf{P}^{\Gamma}$.
\begin{figure}[ht] 
\centering
\includegraphics{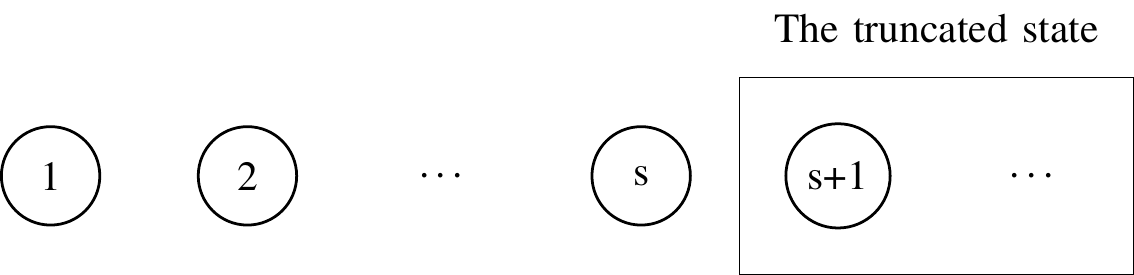}
\caption{States of the pivot source in $\textbf{A}^{s,\Gamma}$ compared to $\textbf{P}^{\Gamma}$.}
\label{fig:augmented}
\end{figure}
\end{proof}
\vspace{-0.5cm}
\begin{deftn}
Let $S^{\textbf{P}}$ be a state in $\textbf{P}^{\Gamma}$ of type $\textrm{T}^{\textbf{\textrm{P}}}\langle S^{\textbf{P}}\rangle = (s,m,\{u_{1}, u_{2}, \ldots, u_{n-m-1} \})$, where the $\{u_i\}$ are ordered from largest to smallest. $Q(S^{\textbf{P}})$, \textit{preceding} type of $S^{\textbf{P}}$, is defined as follows:
\begin{equation}
    Q(S^{\textbf{P}})=\left\{\begin{array}{ll}\textrm{T}^{\textbf{\textrm{P}}}\langle S^{\textbf{P}}\rangle, &\textrm{if }s=1 \\ 
    (s-1,m,\{\Gamma-1,u_{1}-1, u_{2}-1,\ldots,u_{n-m-2}-1\}), &\textrm{if }s \neq 1, u_{n-m-1} = 1 \\
    (s-1,m,\{u_{1}-1, u_{2}-1,\ldots,u_{n-m-1}-1\}), &\textrm{if }s \neq 1, u_{n-m-1} \neq 1
    \end{array}\right.
\end{equation}
\end{deftn}
The reasoning behind $Q(S^{\textbf{P}})$ is that if current state is $S^{\textbf{P}}$ and number of active sources did not change in the previous time slot (excluding pivot source), then the type of previous state must be $Q(S^{\textbf{P}})$. This does not hold for case $s=1$, but we are not interested in such a characterization for this case; nevertheless, we choose $Q(S^{\textbf{P}})$ to be the type $S^{\textbf{P}}$ itself, so that we do not have to exclude this special case in what follows.
Finally, we denote the steady state probability of $S^{\textbf{P}}$ as $\pi(S^{\textbf{P}})$ or $\pi (s,m,\{u_{1}, u_{2}, \ldots, u_{n-m-1} \})$. 
\begin{lemma} \label{lemma:3}
Let $S_1^{\textbf{P}}$ and $S_2^{\textbf{P}}$ be two arbitrary states in $\textbf{P}^{\Gamma}$ where the state of the pivot source is equal for both states. Let the types of $S_1^{\textbf{P}}$ and $S_2^{\textbf{P}}$ be:
$$
    \textrm{T}^{\textbf{\textrm{P}}}\langle S_1^{\textbf{P}}\rangle = (s,m_1,\{u_{1}, u_{2}, \ldots, u_{n-m_1-1} \})
$$
$$
    \textrm{T}^{\textbf{\textrm{P}}}\langle S_2^{\textbf{P}}\rangle = (s,m_2,\{v_{1}, v_{2}, \ldots, v_{n-m_2-1} \})
$$
\begin{enumerate} [i)]
    \item Let $Q_1^{\textbf{P}}$ be any state satisfying $\textrm{T}^{\textbf{\textrm{P}}}\langle Q_1^{\textbf{P}}\rangle =  Q(S_1^{\textbf{P}})$. Then, 
    \begin{equation}
        \lim_{n \to \infty} \frac{\pi(S_1^{\textbf{P}})}{\pi(Q_1^{\textbf{P}})} = 1
    \end{equation}
    \item If $m_1=m_2$, then
    \begin{equation}
        \lim_{n \to \infty} \frac{\pi(S_1^{\textbf{P}})}{\pi(S_2^{\textbf{P}})} = 1
    \end{equation}
        \item If $m_1=m_2+1$, then
    \begin{equation}
        \lim_{n \to \infty} \frac{\pi(S_1^{\textbf{P}})}{n\,\pi(S_2^{\textbf{P}})} = \frac{e^{k\alpha}}{\alpha} - k
    \end{equation}
\end{enumerate}
where $\lim_{n \to \infty} \frac{m_1}{n} = k$ and $\lim_{n \to \infty} \tau n = \alpha$. ($k,\alpha \in \mathbb{R}^+$)
\end{lemma}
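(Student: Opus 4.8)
The plan is to write the global-balance equations for $\textbf{P}^{\Gamma}$ and to prove the three parts together by induction on the pivot age $s$, taking the regime $s\le\Gamma-1$ (where Prop.~\ref{prop:pivot}(ii) already determines the probabilities) as the base. First I would reprise the predecessor analysis from the proof of Lemma~\ref{lemma:truncated}: a recurrent state of type $(s,m,\{u_i\})$ can only be entered from states of type $Q(S_1^{\textbf{P}})$ (active count unchanged) together with exactly one further type whose active count differs by one, the case split being whether $u_{n-m-1}=1$ (a success in the previous slot) or not. The delicate part here is the combinatorial multiplicity: for $u_{n-m-1}\neq1$ there are $m$ distinct ``crossing'' predecessors of type $(s-1,m-1,\{\Gamma-1,u_i-1\})$ — one per currently-active source that could just have risen from $\Gamma-1$ to $\Gamma$ — while for $u_{n-m-1}=1$ it is the $Q$-inflow that carries the factor $m$. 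Writing $\pi_B$ for the probability of the crossing predecessor and counting the now-active pivot in each collision event, this yields, for $s\ge\Gamma+1$ and $u_{n-m-1}\neq1$,
\begin{equation*}
\pi(S_1^{\textbf{P}})=\pi(Q_1^{\textbf{P}})\bigl(1-(m+1)\tau(1-\tau)^{m}\bigr)+m\,\pi_B\bigl(1-m\tau(1-\tau)^{m-1}\bigr),
\end{equation*}
with a companion relation carrying factors $m\tau(1-\tau)^{m}$ and $\tau(1-\tau)^{m+1}$ (and a type-$(s-1,m+1,\cdot)$ predecessor) when $u_{n-m-1}=1$.

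The base case is immediate: by Prop.~\ref{prop:pivot}(ii), $\pi(S_1^{\textbf{P}})=\pi_m$ for $s\le\Gamma-1$, so (i) and (ii) hold trivially and (iii) is just the asymptotics of (\ref{eq:piM}). The only analytic inputs the induction needs are the two limits $m\tau(1-\tau)^{m-1}\to k\alpha e^{-k\alpha}$ and $\pi_m/(n\,\pi_{m-1})\to e^{k\alpha}/\alpha-k$, both following from $\tau\sim\alpha/n$, $m\sim kn$ and $(1-\tau)^{m}\to e^{-k\alpha}$.

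For the inductive step I would assume (ii) and (iii) at pivot age $s-1$, divide the balance relation by $\pi(Q_1^{\textbf{P}})$, and use hypothesis (iii) to evaluate $m\,\pi_B/\pi(Q_1^{\textbf{P}})\to k\alpha/(e^{k\alpha}-k\alpha)$; the two inflow terms then collapse to $1-k\alpha e^{-k\alpha}$ and $k\alpha e^{-k\alpha}$, whose sum is exactly $1$, establishing (i) at $s$. Parts (ii) and (iii) at $s$ then drop out of the telescoping identity
\begin{equation*}
\frac{\pi(S_1^{\textbf{P}})}{\pi(S_2^{\textbf{P}})}=\frac{\pi(S_1^{\textbf{P}})}{\pi(Q(S_1^{\textbf{P}}))}\cdot\frac{\pi(Q(S_1^{\textbf{P}}))}{\pi(Q(S_2^{\textbf{P}}))}\cdot\frac{\pi(Q(S_2^{\textbf{P}}))}{\pi(S_2^{\textbf{P}})},
\end{equation*}
whose outer factors tend to $1$ by part (i) at $s$, while the middle factor is governed by the hypothesis at $s-1$: it tends to $1$ when $m_1=m_2$ (giving (ii)), and to $n(e^{k\alpha}/\alpha-k)$ when $m_1=m_2+1$, which gives (iii) after dividing by $n$. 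As $Q$ strictly lowers $s$ into the base regime, the induction is well-founded.

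The step I expect to be the main obstacle is the transition layer $s=\Gamma$. There the unique predecessor has pivot age $\Gamma-1$, and since recurrent below-threshold ages are distinct (Prop.~\ref{prop:recurrentClass}, inherited by $\textbf{P}^{\Gamma}$ through the correspondence used in Prop.~\ref{prop:pivot}) no non-pivot source can sit at $\Gamma-1$ at the same time; the crossing predecessor is thus absent, and the lone surviving term sends $\pi(S_1^{\textbf{P}})/\pi(Q_1^{\textbf{P}})$ to $1-k\alpha e^{-k\alpha}$ rather than to $1$. What keeps the scheme intact is that this anomalous factor is common to every type-$m$ state at $s=\Gamma$ (and, since $m,m\pm1\sim kn$, is asymptotically the same across neighbouring active counts), so it cancels in ratios: (ii) and (iii) still hold at $s=\Gamma$, and the induction for $s\ge\Gamma+1$ invokes only (ii)/(iii) — never (i) — at that layer. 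The remaining genuinely analytic point is to justify taking the $n\to\infty$ limit termwise through the finite-$n$ balance recursion; in particular, to reach pivot ages $s$ that must grow with $n$ in order to exceed $\Gamma$, one must prevent the $1+o(1)$ factor generated at each step from accumulating, which is cleanest to handle by running the recursion on the limiting ratios themselves rather than on each finite-$n$ instance.
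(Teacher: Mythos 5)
Your proposal follows the same route as the paper's Appendix~\ref{app:lemma-f(k)}: base case $s\le\Gamma-1$ from Prop.~\ref{prop:pivot}(ii), a one-step balance recursion in the pivot age with the case split on whether some $u_i=1$, evaluation of the two inflow terms as $1-k\alpha e^{-k\alpha}$ and $k\alpha e^{-k\alpha}$ using part (iii) at level $s-1$, and the telescoping through $Q$ to recover parts (ii) and (iii). The one substantive divergence is the layer $s=\Gamma$, and there your reading is the more careful one. The paper lists $(\Gamma-1,m_1-1,\{\Gamma-1,u_1-1,\ldots\})$ as a predecessor, assigns it probability $\pi_{m_1-1}$, and concludes $\pi(S_1^{\textbf{P}})=\pi_{m_1}$ exactly; but that predecessor has both the pivot and a non-pivot source at age $\Gamma-1<\Gamma$, so it is transient by Prop.~\ref{prop:recurrentClass} together with the correspondence in Prop.~\ref{prop:pivot}, and its steady-state probability is $0$. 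Dropping it, as you do, gives $\pi(S_1^{\textbf{P}})=\pi_{m_1}\bigl(1-m_1\tau(1-\tau)^{m_1-1}\bigr)$ at $s=\Gamma$ (this can be verified by hand for $n=2$, $\Gamma=3$), so part (i) as stated in fact fails at that single layer, the ratio tending to $1-k\alpha e^{-k\alpha}$; your observation that this factor is asymptotically common across neighbouring active counts, so that (ii) and (iii) survive and the induction for $s\ge\Gamma+1$ consumes only those two properties at the $\Gamma$ layer, is exactly what keeps Theorem~\ref{lemma:f(k)} and Corollary~1 intact. Your combinatorial factors (multiplicity $m$ rather than $m+1$ for the crossing predecessors, exponents counting the active pivot) also differ from the paper's, but only in ways that vanish in the limit. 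The one point neither treatment fully nails down is the uniformity required to propagate the one-step limit through the $s-\Gamma$ recursion steps when $s$ must grow with $n$ to exceed $\Gamma\sim rn$: you correctly name this as the remaining analytic obstacle and gesture at a remedy, whereas the paper passes over it silently, but your proposal would need that step actually carried out to be complete.
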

\begin{proof}
See Appendix \ref{app:lemma-f(k)}.
\end{proof}

\vspace{-0.3cm}
\begin{theorem} \label{lemma:f(k)}
For some $r,\alpha \in \mathbb{R}^+$, such that $\lim_{n \to \infty} \frac{\Gamma}{n} = r$ and $\lim_{n \to \infty} \tau n = \alpha$, define $f:(0,1)\to \mathbb{R}$:
\begin{equation}
f(x) = \ln(\frac{e^{x\alpha}}{x\alpha} - 1) + \ln(\frac{r}{x+r-1}-1)
\end{equation}
Then, for all $m$ such that $\lim_{n \to \infty} \frac{m}{n} = k \in (0,1)$ and $ s \in \mathbb{Z}^+$ 
\begin{equation}
    \lim_{n \to \infty} \ln \frac{P_m^{(s)}}{P_{m-1}^{(s)}} = f(k)
\end{equation}
where $P_m^{(s)}$ is the steady state probability of having $m$ active sources (excluding the pivot source), given that state of the pivot source is $s$.
\end{theorem}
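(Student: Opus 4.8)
The plan is to write $P_m^{(s)}$ as a sum of the steady-state probabilities of the individual pivoted states of type $(s,m,\cdot)$, and then to split the ratio $P_m^{(s)}/P_{m-1}^{(s)}$ into a purely combinatorial factor (the number of such states) times a probabilistic factor (a representative single-state ratio). Up to the common normalization $\Pr(\textrm{pivot}=s)$, which cancels in the ratio, $P_m^{(s)}$ equals the total steady-state probability of states whose pivot is at age $s$ and which have exactly $m$ active sources among the other $n-1$ users. Let $N_m^{(s)}$ denote the number of such states. Since, by Lemma \ref{lemma:3}(ii), any two states sharing the type $(s,m,\cdot)$ are asymptotically equiprobable, I would fix a representative with probability $\pi_m^{(s)}$ and write $P_m^{(s)} = N_m^{(s)}\,\pi_m^{(s)}(1+o(1))$, so that
\begin{equation*}
\frac{P_m^{(s)}}{P_{m-1}^{(s)}} = \frac{N_m^{(s)}}{N_{m-1}^{(s)}}\cdot\frac{\pi_m^{(s)}}{\pi_{m-1}^{(s)}}\,(1+o(1)).
\end{equation*}

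Next I would compute the combinatorial ratio exactly as in Lemma \ref{lemma:truncated}, but over $n-1$ non-pivot users. Choosing which $m$ of them are active contributes $\binom{n-1}{m}$, and assigning distinct below-threshold ages to the remaining $n-1-m$ users (from the values below $\Gamma$, excluding the value $s$ when $s<\Gamma$, by the distinctness property of Proposition \ref{prop:recurrentClass}) contributes a falling factorial. The dependence on whether $s<\Gamma$ or $s\geq\Gamma$ shifts an index by one and affects only lower-order terms, so that
\begin{equation*}
\frac{N_m^{(s)}}{N_{m-1}^{(s)}} = \frac{n-m}{m(\Gamma-n+m)}\,(1+o(1)) = \frac{1}{n}\cdot\frac{1-k}{k(r-1+k)}\,(1+o(1)),
\end{equation*}
using $m/n\to k$ and $\Gamma/n\to r$. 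This factor is $\Theta(1/n)$.

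For the probabilistic factor I would invoke Lemma \ref{lemma:3}(iii) with $m_1=m$ and $m_2=m-1$ (so $m_1=m_2+1$ and both pivots equal $s$), giving $\pi_m^{(s)}/\pi_{m-1}^{(s)} = n\big(e^{k\alpha}/\alpha - k\big)(1+o(1))$, which is $\Theta(n)$ and exactly cancels the $1/n$ above. Multiplying the two factors,
\begin{equation*}
\frac{P_m^{(s)}}{P_{m-1}^{(s)}} \longrightarrow \frac{1-k}{k(r-1+k)}\Big(\frac{e^{k\alpha}}{\alpha}-k\Big) = \frac{(1-k)(e^{k\alpha}-k\alpha)}{k\alpha(k+r-1)}.
\end{equation*}
It then remains to recognize this as $e^{f(k)}$: writing $\frac{r}{k+r-1}-1 = \frac{1-k}{k+r-1}$ and $\frac{e^{k\alpha}}{k\alpha}-1 = \frac{e^{k\alpha}-k\alpha}{k\alpha}$, the product of these two equals the right-hand side above, so taking logarithms yields the claimed limit $f(k)$, which is notably independent of $s$.

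The step I expect to be most delicate is justifying $P_m^{(s)}=N_m^{(s)}\,\pi_m^{(s)}(1+o(1))$: Lemma \ref{lemma:3}(ii) only supplies pairwise ratios tending to $1$, whereas the number of states summed over grows with $n$, so a naive termwise argument does not control the sum. To make this rigorous I would establish that the convergence in Lemma \ref{lemma:3}(ii) is uniform over the states of a fixed type — obtained by chaining the preceding-type estimate of Lemma \ref{lemma:3}(i) and bounding the per-step multiplicative error uniformly — or, equivalently, by sandwiching $P_m^{(s)}$ between $N_m^{(s)}$ times the minimum and the maximum single-state probabilities of the type and showing these two bounds agree to leading order. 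Everything beyond this uniformity point is routine counting and algebra.
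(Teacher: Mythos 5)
Your proposal is correct and follows essentially the same route as the paper: the paper likewise writes $P_m^{(s)}$ and $P_{m-1}^{(s)}$ as sums over the $N_m$ and $N_{m-1}$ equal-type states, normalizes by a reference state's probability, invokes Lemma \ref{lemma:3}(ii) and (iii) to reduce each term, and obtains the same ratio $\frac{n(n-m)}{m(\Gamma-n+m)}\bigl(\frac{e^{k\alpha}}{\alpha}-k\bigr) \to \bigl(\frac{e^{k\alpha}}{k\alpha}-1\bigr)\frac{1-k}{r+k-1} = e^{f(k)}$. The uniformity issue you flag (replacing a growing number of summands by their common limit) is indeed the delicate point, and the paper's own proof passes over it just as tersely, so your proposed sandwich argument would if anything be a refinement rather than a departure.
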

\begin{proof}
The term $P_m^{(s)}$ is the total steady state probability of states in which there are $m$ active users and the state of the pivot source is $s$. The number of such recurrent states is:
\begin{equation}
    N_m = \binom{n-1}{m}\frac{(\Gamma-1)!}{(\Gamma-n+m)!} 
\end{equation}
Meanwhile, the number of recurrent states containing $m-1$ active users is:
\begin{equation}
    N_{m-1} = \binom{n-1}{m-1}\frac{(\Gamma-1)!}{(\Gamma-n+m-1)!} 
\end{equation}
Let $\mathcal{B}_m = \{S_1^{(m)},S_2^{(m)},\ldots,S_{N_m}^{(m)}\}$ be the set of all recurrent type-$m$ states where the state of the pivot source is $s$. Similarly, we define the set $\mathcal{B}_{m-1} = \{S_1^{(m-1)},S_2^{(m-1)},\ldots,S_{N_{m-1}}^{(m-1)}\}$ as the set of all recurrent type-$(m-1)$ states where the state of the pivot source is $s$. Then,
\begin{equation} \small
\begin{aligned} 
    \lim_{n \to \infty} \frac{P_m^{(s)}}{P_{m-1}^{(s)}} &= \lim_{n \to \infty} \frac{\sum\limits_{i=1}^{N_m} \pi(S_i^{(m)})}{\sum\limits_{j=1}^{N_{m-1}} \pi(S_j^{(m-1)})} \\
    &\stackrel{(a)}{=} \lim_{n \to \infty} \frac{n \sum\limits_{i=1}^{N_m} \left[\pi(S_i^{(m)}) / n\pi(S_1^{(m-1)})\right]}{\sum\limits_{j=1}^{N_{m-1}} \left[\pi(S_j^{(m-1)}) / \pi(S_1^{(m-1)})\right]} \stackrel{(b)}{=} \lim_{n \to \infty} \frac{n \sum\limits_{i=1}^{N_m} (\frac{e^{k\alpha}}{\alpha} - k)}{\sum\limits_{j=1}^{N_{m-1}} 1} \\
    &= \lim_{n \to \infty} \frac{n N_m(\frac{e^{k\alpha}}{\alpha} - k)}{N_{m-1}} = \lim_{n \to \infty} \frac{n (n-m)(\frac{e^{k\alpha}}{\alpha} - k)}{m(\Gamma-n+m)} \\
    &= \left(\frac{e^{k\alpha}}{k\alpha} - 1\right) \left(\frac{1-k}{r+k-1}\right)
\end{aligned}
\end{equation}
where (a) is obtained by by dividing both sides of the fraction by the steady state probability of any element of $\mathcal{B}_{m-1}$, which was arbitrarily chosen as the first element, and (b) follows from Lemma \ref{lemma:3} (ii) and (iii). Hence, 
\begin{equation}
    \lim_{n \to \infty} \ln \frac{P_m^{(s)}}{P_{m-1}^{(s)}} = \ln (\frac{e^{k\alpha}}{k\alpha} - 1) + \ln (\frac{r}{r+k-1}-1) = f(k)
\end{equation}
\end{proof}
\vspace{-0.5 cm}
The above argument shows that as $n \to \infty$ the relation $P_m^{(s)} / P_{m-1}^{(s)}$ determines the PMF of $m$ regardless of the state $s$ of the pivot source. Consequently, the number of active sources (excluding the pivot), $m$, is independent of the state of the pivot source. We record this in the following corollary:
\begin{corol}
In the case of a large network ($n \to \infty$), 
\begin{enumerate}[(i)]
    \item The number of active sources, $m$, (excluding the pivot) is independent of the state $s$ of the pivot source.
    \item As long as $s \geq \Gamma$, the probability of a successful transmission being made by the pivot source is $\tau(1-\tau)^m$ which has no dependence on $s$.
    \item The probability of the pivot state of $s \geq \Gamma$ being reset to 1 is $q_s = \lim_{l \to \infty} \sum\limits_{m=0}^l P_m^{(s)}\tau(1-\tau)^m$.
    
\end{enumerate}
\end{corol}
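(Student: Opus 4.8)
The plan is to derive all three parts from Theorem~\ref{lemma:f(k)}, which already supplies the essential fact: the limiting log-ratio $\lim_{n\to\infty}\ln\bigl(P_m^{(s)}/P_{m-1}^{(s)}\bigr)=f(k)$ carries no dependence on the pivot state $s$. Everything else is bookkeeping: a normalization argument for part~(i), the standard slotted-ALOHA collision count for part~(ii), and the law of total probability for part~(iii).

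For part~(i), I would reconstruct the conditional PMF of $m$ from its consecutive ratios. Writing $P_m^{(s)}=P_0^{(s)}\prod_{i=1}^{m}\bigl(P_i^{(s)}/P_{i-1}^{(s)}\bigr)$ and fixing $P_0^{(s)}$ by the normalization $\sum_m P_m^{(s)}=1$, the entire distribution $\{P_m^{(s)}\}_m$ is a function of the ratio sequence alone. Since by Theorem~\ref{lemma:f(k)} every ratio converges, as $n\to\infty$, to the $s$-independent limit $e^{f(k)}$ with $k=m/n$, the limiting conditional law of $m/n$ coincides for all $s$. Hence the conditional distribution of $m$ given $s$ equals the marginal distribution of $m$ in the limit, which is precisely the asserted independence. (For $s<\Gamma$ the conditional probabilities equal the $\pi_m$ of Lemma~\ref{lemma:truncated} by Proposition~\ref{prop:pivot}(ii); for $s\ge\Gamma$ they are governed by the augmented chain and Lemma~\ref{lemma:3}.)

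Parts~(ii) and~(iii) are then immediate. For part~(ii), condition on $s\ge\Gamma$: the pivot is active and, by the policy, attempts transmission with probability $\tau$ \emph{regardless} of the exact value of $s$, which is the defining feature of the truncated model (an active source behaves identically no matter how far above $\Gamma$ its age sits). Conditioning further on $m$ other active sources, the attempt succeeds iff the pivot transmits and all $m$ others stay silent, an event of probability $\tau(1-\tau)^m$ by independence of the per-slot decisions. This depends neither on the value of $s$ beyond $s\ge\Gamma$, nor on $s$ through $m$, by part~(i). For part~(iii) I would simply apply the law of total probability over the number $m$ of active sources, writing $q_s=\lim_{l\to\infty}\sum_{m=0}^{l} P_m^{(s)}\,\tau(1-\tau)^m$, the limit in $l$ accommodating the growing support.

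The main obstacle lies in part~(i): the passage from convergence of the individual ratios to convergence of the \emph{normalized} distribution, and thence to genuine independence. Pointwise convergence of ratios does not by itself force the normalized PMFs to share a limit unless one controls the tails and the normalizing constant $P_0^{(s)}$ uniformly in $s$. I would handle this by observing that the log-PMF is, to leading order, $n\int_0^{k} f(x)\,dx$, an $s$-independent rate function, so both conditional distributions of $m/n$ obey the same large-deviation estimate and concentrate on the same set (the zeros of $f$ or the relevant boundary, analyzed in the subsequent root-analysis section). This shared concentration is what legitimizes replacing the conditional law of $m$ given $s$ by the common limit and justifies exchanging the limit and the sum in part~(iii).
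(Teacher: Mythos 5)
Your proof is correct and takes essentially the same route as the paper: all three parts are read off from the $s$-independence of the limiting ratios $P_m^{(s)}/P_{m-1}^{(s)}$ established in Theorem \ref{lemma:f(k)}, with part (ii) reduced to the policy definition plus part (i), and part (iii) to the law of total probability over $m$. If anything you are more careful than the paper, whose two-sentence proof passes silently from convergence of the ratios to convergence of the normalized PMF; the concentration argument you sketch to close that gap is exactly what the paper supplies later in Theorems \ref{thm:1} and \ref{thm:2}.
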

\begin{proof}
Parts (i) and (ii) follow from the proof of Lemma \ref{lemma:f(k)}. Every time the state of the pivot source reaches a particular value $s$, it observes an identical distribution in terms of number of active users. Therefore, the transition probabilities from $s=i$ to $s=i+1$ for $i < \Gamma$ and, the transition probability from $s \geq \Gamma$ to 1 depends only on the number of active users, hence, the evolution of the state of the pivot can be represented by the state diagram in Fig. \ref{fig:markov}.  
\end{proof}

\vspace{-0.5 cm}
\begin{figure*}[!htbp] 
\centering
\includegraphics[scale=0.75]{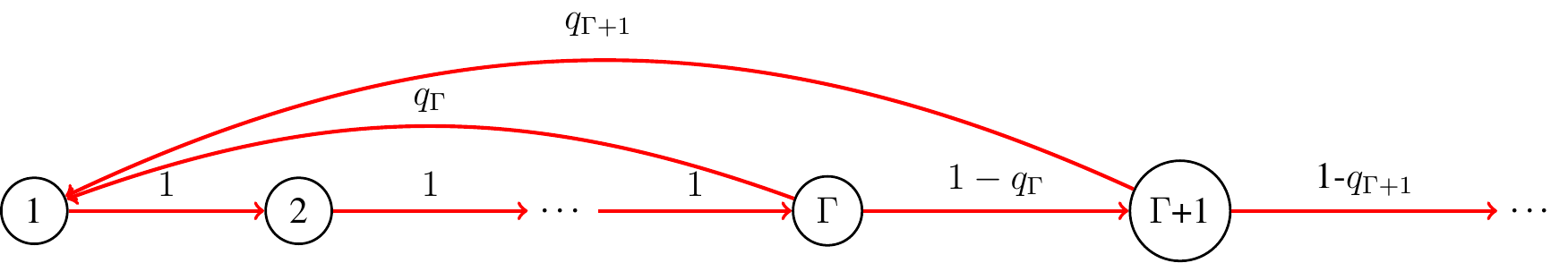}
\caption{State diagram of the pivot source}
\label{fig:markov}
\end{figure*}
\par The transition probabilities $q_s$ marked on Fig. \ref{fig:markov} refer to the probability of a successful transmission made by the pivot source. In the rest, we will consider the asymptotic case as the network size $n$ grows. We will show that in the limit as $n\to \infty$, $q_s$ is equal to some $q_o$ for all values of $s$ as long as the pivot source is active. 
\vspace{-0.5cm}
\subsection{Large network asymptotics} \label{sect:rootAnalysis}
In this part, we investigate the PMF of $m$, number of active sources in the network. Function $f$ of Theorem \ref{lemma:f(k)} gives valuable insight on the distribution of $m$ and we will derive some properties of $f$ with the eventual goal of proving that the ratio of active users, $k$, converges to the root of $f$ in probability, presented in Theorem \ref{thm:1}. 
\par To facilitate the asymptotic analysis in the network size $n$, we replace the main parameters of the model, $\tau$ and $\Gamma$, with the following that control the scaling of these parameters with $n$. As the number of active sources, $m$, takes values between 0 and $n$, the fraction of active sources, $k$, will vary between 0 and 1.
\vspace{-0.5cm}
\begin{equation} \label{eq:30}
\alpha = n\tau, \;\; r = \Gamma/n,  \;\;  k = m/n
\end{equation}
\begin{prop}
    Roots of $f$ for which $f$ is decreasing correspond one-to-one to the local maxima of $P_m$, with a scale of $n$.
\end{prop}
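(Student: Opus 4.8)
The plan is to translate the discrete statement about the sequence $(P_m)$ into a statement about the sign of $f$, using the exact ratio from Lemma~\ref{lemma:truncated}(ii). First I would observe that $m^{*}$ is a local maximum of $(P_m)$ exactly when $P_{m^{*}} \ge P_{m^{*}-1}$ and $P_{m^{*}+1} \le P_{m^{*}}$, \ie\ when $\ln(P_{m^{*}}/P_{m^{*}-1}) \ge 0 \ge \ln(P_{m^{*}+1}/P_{m^{*}})$. The key input is that substituting $\alpha = n\tau$, $r = \Gamma/n$, $k = m/n$ into the ratio of Lemma~\ref{lemma:truncated}(ii) and letting $n \to \infty$ (the same computation carried out in the proof of Theorem~\ref{lemma:f(k)}, which yields $(\tfrac{e^{k\alpha}}{k\alpha}-1)(\tfrac{1-k}{r+k-1})$) gives $\ln(P_m/P_{m-1}) \to f(m/n)$. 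Hence for large $n$ the sign of $\ln(P_m/P_{m-1})$ agrees with the sign of $f(m/n)$ on any range of $k$ where $f$ is bounded away from $0$: the sequence $(P_m)$ is increasing in $m$ where $f>0$ and decreasing where $f<0$.

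Second, I would exploit this sign correspondence in both directions. Given a root $k_0$ of $f$ at which $f$ is decreasing, continuity provides a neighborhood on which $f>0$ to the left of $k_0$ and $f<0$ to the right, with $f$ bounded away from $0$ on compact subsets of each side. By the sign correspondence, for large $n$ the sequence $P_m$ is strictly increasing for $m/n$ just below $k_0$ and strictly decreasing for $m/n$ just above, so it possesses exactly one local maximum $m^{*}$ in the window; as $n\to\infty$ the window shrinks to the point $k_0$, forcing $m^{*}/n \to k_0$, \ie\ $m^{*} \sim n k_0$, which is the ``scale of $n$'' in the statement. Conversely, a local maximum $m^{*}$ forces $f(m^{*}/n) \ge 0 \ge f((m^{*}+1)/n)$ in the limit, so $f$ passes from nonnegative to nonpositive across the shrinking interval $[m^{*}/n,(m^{*}+1)/n]$, pinning a decreasing root $k_0$ with $m^{*}/n \to k_0$.

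For the one-to-one correspondence I would argue that on any interval where $f$ keeps a constant nonzero sign, $(P_m)$ is eventually strictly monotone and contributes no interior extremum; consequently the local maxima of $(P_m)$ match the sign changes of $f$ from $+$ to $-$, \ie\ its decreasing roots, while the $-$-to-$+$ crossings (increasing roots) match local minima and are excluded. Since $f$ is real-analytic on $(0,1)$ and not identically zero, its roots are isolated, so no interval of constant zero can occur and the matching is genuinely bijective.

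The main obstacle is the behavior near the root itself, where $f \approx 0$ and the bare pointwise limit $\ln(P_m/P_{m-1}) \to f(m/n)$ no longer determines the sign of the finite-$n$ increment. The remedy is that the sign at the root is never needed: because $f$ is strictly decreasing through $k_0$, it is bounded away from $0$ on the two one-sided compact sub-intervals, and the convergence of $\ln(P_m/P_{m-1})$ to $f$ is uniform on compacta (the defining factors $(1-\tau)^{m}$ and $m\tau$ converge uniformly in $k$ on closed subintervals of $(0,1)$). This uniformity is what guarantees the correct sign for \emph{all} indices $m$ in those sub-intervals simultaneously, which is precisely what traps a single local maximum at scale $nk_0$.
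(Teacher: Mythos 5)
Your proposal is correct and follows essentially the same route as the paper, which justifies this proposition only with a brief remark that $\ln(P_m/P_{m-1})\to f(m/n)$ so that local maxima of $P_m$ (where the ratio changes sign from positive to negative) correspond to the decreasing roots of $f$. Your version simply makes rigorous what the paper leaves informal, in particular the uniform-on-compacta convergence away from the root and the two-sided trapping argument that pins $m^{*}/n\to k_0$.
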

\par In this context, $\alpha$ and $r$ are fixed system parameters while $k$, the fraction of active users, is a variable indicating the instantaneous system load. As the change in $P_m$ is determined by $f(k)$, the roots of $f(k)$ provide the local extrema of $P_m$. Local maxima of $P_m$ are the points where both $\ln {P_m}/{P_{m-1}}$ and $\ln {P_m}/{P_{m+1}}$ are positive, corresponding to roots of $f(k)$ for which $f$ is decreasing. The following proposition restricts the number of roots $f(k)$, and therefore the number of local maxima $P_m$ can have.
\begin{prop} \label{prop:2}
    The number of distinct roots of $f$ is at least 1 and at most 3.
\end{prop}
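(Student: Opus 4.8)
The plan is to reduce the problem of counting the roots of the transcendental function $f$ to counting the zeros of an elementary function whose monotonicity is transparent. The first and most important step is an algebraic simplification of the equation $f(x)=0$. Writing the second logarithm as $\ln\!\left(\frac{r}{x+r-1}-1\right)=\ln\frac{1-x}{x+r-1}$, the equation $f(x)=0$ becomes $\left(\frac{e^{x\alpha}}{x\alpha}-1\right)\frac{1-x}{x+r-1}=1$. Clearing denominators (legitimate on the domain of $f$, where $x+r-1>0$) and cancelling the matching $x\alpha(1-x)$ terms collapses this to the clean identity
\begin{equation}
e^{x\alpha}(1-x)=r\alpha x .
\end{equation}
On the domain of $f$ this is equivalent to $f(x)=0$; I expect this cancellation to be the crux of the whole argument.

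For the lower bound I would argue by the intermediate value theorem. The domain of $f$ is $(\max\{0,1-r\},1)$, since $f$ requires $h(x)=\frac{1-x}{x+r-1}>0$, whereas $g(x)=\frac{e^{x\alpha}}{x\alpha}-1>0$ holds automatically (as $e^u>u$ for $u>0$). At the left endpoint one of the two logarithms diverges to $+\infty$ (either $g\to\infty$ as $x\to 0^+$ when $r\ge 1$, or $h\to\infty$ as $x\to(1-r)^+$ when $r<1$), so $f\to+\infty$; as $x\to 1^-$ we have $h\to 0^+$ with $g$ bounded, so $f\to-\infty$. Continuity then forces at least one root.

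For the upper bound I would take logarithms in the simplified identity. Since both sides of $e^{x\alpha}=\frac{r\alpha x}{1-x}$ are positive on $(0,1)$, every root of $f$ is a zero of
\begin{equation}
\psi(x)=x\alpha-\ln(r\alpha)-\ln x+\ln(1-x),
\end{equation}
which is smooth on all of $(0,1)$. A direct computation gives $\psi'(x)=\alpha-\frac{1}{x(1-x)}$. Because $x(1-x)$ attains its maximum $\tfrac14$ at $x=\tfrac12$, the term $\frac{1}{x(1-x)}$ is at least $4$ and is convex, so $\psi'$ is negative near both endpoints and the condition $\psi'(x)>0\iff x(1-x)>1/\alpha$ carves out at most one subinterval of $(0,1)$. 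If $\alpha\le 4$ then $\psi'\le 0$ throughout, $\psi$ is monotone, and has at most one zero; if $\alpha>4$ then $\psi$ is decreasing, then increasing, then decreasing, a shape with at most three zeros. In either case $\psi$ has at most three roots on $(0,1)$, and since the roots of $f$ form a subset of these, $f$ has at most three roots.

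The only genuine obstacle is spotting the simplification in the first step; once the equation is reduced to $e^{x\alpha}(1-x)=r\alpha x$ and log-linearized, the root count follows from the elementary observation that $\psi'$ changes sign at most twice, governed entirely by whether $\alpha$ exceeds the threshold value $4$.
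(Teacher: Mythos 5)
Your proof is correct and takes essentially the same route as the paper's: your $\psi(x)$ is just $\ln\bigl(r(x)/r\bigr)$ for $r(x)=e^{x\alpha}(1-x)/(x\alpha)$, the function the paper isolates from $f(x)=0$, and your sign analysis of $\psi'$ via $\alpha x(1-x)=1$ is exactly the paper's count of the roots of $dr/dx\propto -\alpha x^{2}+\alpha x-1$. The only (minor) difference is that you treat the domain of $f$ more carefully, noting it is $(\max\{0,1-r\},1)$, whereas the paper implicitly assumes $r\ge 1$ when writing $f(0^{+})=+\infty$.
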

\begin{proof}
    See Appendix \ref{app:prop2}.
\end{proof}
\begin{figure}
\centering
\begin{subfigure}{.4\textwidth}
\includegraphics[height=5cm]{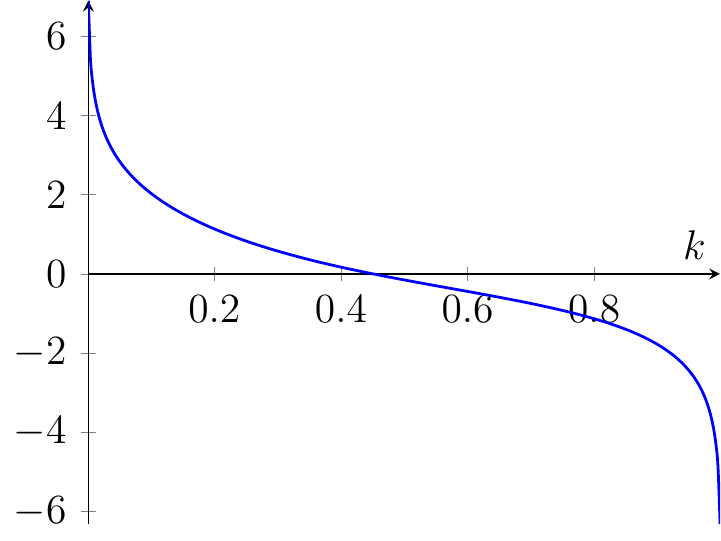}
\caption{Single root case ($\alpha = 2$, $r = 1.5$)}
\end{subfigure}
\hskip 5pt
\begin{subfigure}{.4\textwidth}
\centering
\includegraphics[height=5cm]{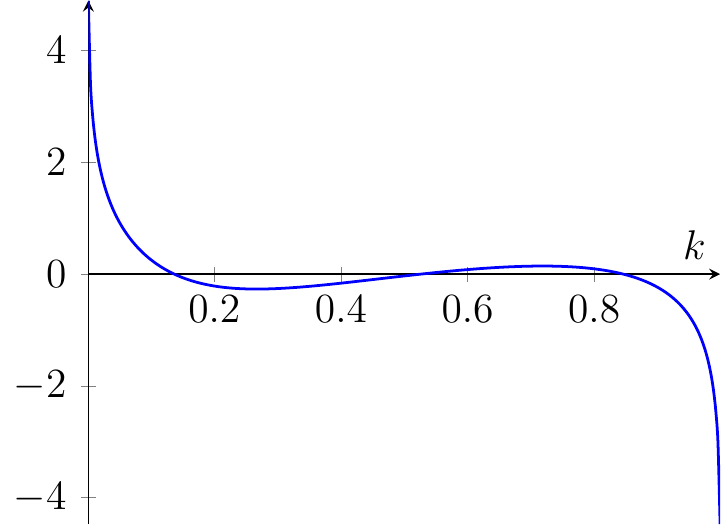}
\caption{Three-root case ($\alpha = 5$, $r = 2.5$)}
\end{subfigure}
\caption{Plot of $f(k)$}
\end{figure}

\begin{figure}[t]
\centering
\captionsetup[subfigure]{justification=centering}
\hspace{-18 mm}
\begin{subfigure}{0.4\textwidth}
\centering
\includegraphics[height=6cm]{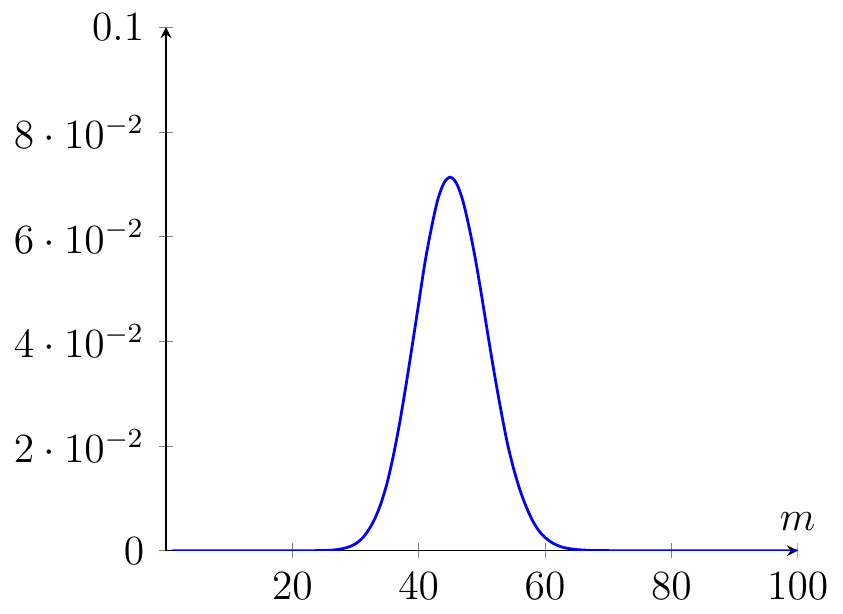}
\caption{Single root case ($\alpha = 2$, $r = 1.5$)}
\end{subfigure}
\hskip 8pt
\begin{subfigure}{0.4\textwidth}
\centering
\includegraphics[height=6cm]{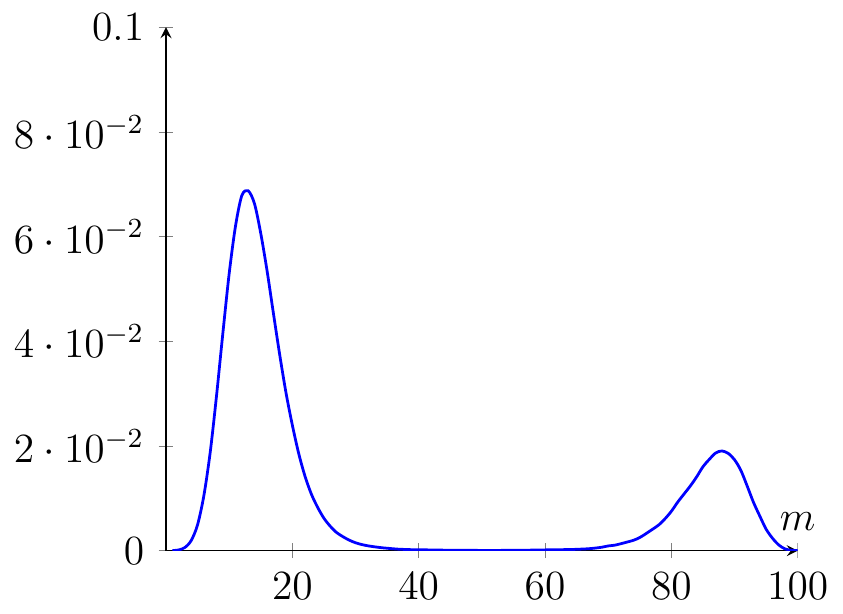}
\caption{Three-root case ($\alpha = 5$, $r = 2.5$)}
\end{subfigure}
\caption{PMF of $m$ $(n=100)$}
\end{figure}

\vspace{-0.3 cm}
\par  Since $f(k)$ has at most three roots, there can be at most 2 roots of $f$ where $f$ is decreasing and consequently at most two local maxima. Cases of one local maximum and two local maxima are analyzed separately, however they  lead to a similar discussion. Theorem \ref{thm:1} is given for the case where $f(k)$ has only one root and a single local maximum. The case with 2 local maxima is discussed in section \ref{sect:2peak}.
\begin{theorem} \label{thm:1}
	Let $k_0$ be the only root of $f(k)$ and $m$ be the number of active sources. For the sequence $\epsilon_n = c n^{-1/3}$ where $c \in \mathbb{R}^+$,
	\begin{equation} \label{eq:thm1}
	\Pr(|\frac{m}{n} - k_0| < \epsilon_n) \to 1
	\end{equation}
\end{theorem}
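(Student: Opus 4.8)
The plan is to show that the PMF $\{P_m\}$ is sharply unimodal with its peak at some mode $m^\ast \approx nk_0$, and that it decays fast enough on either side that essentially all the mass concentrates in a window of width $\Theta(n\epsilon_n)$ about $m^\ast$. I would work directly with the \emph{exact} ratio $P_m/P_{m-1}$ of (\ref{eq:PmRatio}) rather than with $f$, setting $g_n(m):=\ln\!\big(P_m/P_{m-1}\big)$. The first step is to record that, under the same substitution $\alpha=n\tau$, $r=\Gamma/n$, $k=m/n$ used in the proof of Theorem \ref{lemma:f(k)}, the exact ratio converges to $e^{f(k)}$, and in fact $g_n(m)=f(m/n)+O(1/n)$ \emph{uniformly} for $m/n$ in any fixed compact subinterval of $(0,1)$ containing $k_0$. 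This uniform $O(1/n)$ control is precisely what licenses passing between discrete sums of $g_n$ and values of the smooth function $f$.

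Second, I would pin down the sign structure. Since by hypothesis $k_0$ is the only root of $f$ and $f$ is decreasing there, continuity (together with $f(0^+)=+\infty$ and $f(1^-)=-\infty$) forces $f>0$ on $(0,k_0)$ and $f<0$ on $(k_0,1)$. Hence for all large $n$, $g_n(m)>0$ just below $nk_0$ and $g_n(m)<0$ just above, so $P_m$ increases up to a mode $m^\ast=nk_0+O(1)$ and decreases thereafter; in particular $\{P_m\}$ is unimodal, and $P_{m^\ast}\le 1$.

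Third, the core estimate, I would bound the upper tail $\Pr(m/n>k_0+\epsilon_n)=\sum_{m>n(k_0+\epsilon_n)}P_m$. Put $m_+=\lceil n(k_0+\epsilon_n)\rceil$. Telescoping, $\ln\!\big(P_{m_+}/P_{m^\ast}\big)=\sum_{l=m^\ast+1}^{m_+}g_n(l)$, which I split at $n(k_0+\tfrac{\epsilon_n}{2})$. The lower block has $O(n\epsilon_n)$ terms, each $\le O(1/n)$ since $f\le 0$ there, contributing only $O(\epsilon_n)=o(1)$; the upper block has $\Theta(n\epsilon_n)$ terms with $l/n\ge k_0+\tfrac{\epsilon_n}{2}$, where by Taylor expansion $f(l/n)\le f(k_0+\tfrac{\epsilon_n}{2})\le \tfrac14 f'(k_0)\epsilon_n<0$. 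Multiplying, $\ln\!\big(P_{m_+}/P_{m^\ast}\big)\le \Theta\!\big(f'(k_0)\,n\epsilon_n^2\big)=-\Theta(n^{1/3})\to-\infty$ once $\epsilon_n=cn^{-1/3}$. Because $f$ is decreasing past $k_0$, the ratios beyond $m_+$ are bounded by $\rho:=e^{f(k_0+\epsilon_n)+O(1/n)}=1-\Theta(\epsilon_n)<1$, so the tail is dominated by a geometric sum, $\sum_{m\ge m_+}P_m\le P_{m_+}/(1-\rho)=P_{m_+}\cdot O(n^{1/3})$. Combining with $P_{m^\ast}\le 1$ gives $\Pr(m/n>k_0+\epsilon_n)\le O(n^{1/3})\,e^{-\Theta(n^{1/3})}\to 0$. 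The lower tail $\Pr(m/n<k_0-\epsilon_n)$ is symmetric, using that $f>0$ and decreasing on $(k_0-\epsilon_n,k_0)$ so $P_m$ decays geometrically as $m$ falls below $m^\ast$. Adding the two bounds yields (\ref{eq:thm1}).

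The main obstacle is quantitative bookkeeping rather than any conceptual subtlety: I must make the uniform estimate $g_n(m)=f(m/n)+O(1/n)$ explicit, and then verify that $\epsilon_n=cn^{-1/3}$ simultaneously shrinks to $0$, makes the accumulated signal $n\epsilon_n^2=c^2n^{1/3}$ diverge, and dominates every error term. The two errors to control are the per-term $O(1/n)$ approximation, which summed over the window is only $O(\epsilon_n)$, and the neglected cubic Taylor term, which contributes $O(n\epsilon_n^3)=O(1)$; both are negligible against $n^{1/3}$, and this balancing is exactly what dictates the exponent $-1/3$. One should also confirm that the $O(1)$ uncertainty in the location of the true mode $m^\ast$ relative to $nk_0$ shifts the window indices by only $O(1)$ and hence affects the estimates at lower order.
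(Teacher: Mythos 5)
Your argument is correct and follows essentially the same route as the paper's: both pin down the sign structure of $f$ around its unique root, telescope the log-ratios $\ln(P_m/P_{m-1}) \approx f(m/n)$ into a Riemann sum to show that $P_{n(k_0 \pm \epsilon_n)}$ is exponentially small relative to the peak, control the remaining tail by a geometric series, and balance all error terms with the choice $\epsilon_n = cn^{-1/3}$. The differences are cosmetic: the paper proves the claim as the special case $(a,b)=(0,1)$ of a more general interval lemma (reused for Theorem \ref{thm:2}) and normalizes by $P_{nk_0}$, whereas you normalize by $P_{m^\ast}\le 1$ and are somewhat more explicit about the uniform $O(1/n)$ discrepancy between the exact log-ratio and its limit $f$.
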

\begin{proof}
	See Appendix \ref{app:thm1}.
\end{proof}
\vspace{-0.3 cm}
This theorem establishes that the fraction of active users converges in probability to $k_0$ as the network size grows. Loosely speaking, threshold-ALOHA gradually converts the system to one with $nk_0$ users with a slotted ALOHA analysis. At steady state, approximately $n k_0$ sources will be making transmission attempts while remaining $n-nk_0$ sources with small age will be idle. For this reason, it resembles a stabilized ALOHA algorithm. For large N, throughput of the channel remains close to $e^{-1}$ while average age can be dramatically improved through optimal parameters, as will be shown in the section \ref{sect:ageCalc}.
\vspace{-0.5cm}
\subsection{Double Peak Case} \label{sect:2peak}
\par In this section, we extend the single peak analysis of the previous section to the case with 2 peaks. Theorem \ref{thm:2} gives the same result as in Theorem \ref{thm:1}, although it imposes an additional integral constraint to be applicable.
\par So far, it has been argued that roots of $f(k)$ where $f$ is decreasing correspond to the peaks in the probability distribution of the number of active sources. If there are two such roots, then there will be two possible values of $m$ where the number of active sources are concentrated around. Accordingly, we define the following state sets:
\begin{equation} \small
    \mathcal{S}_0 \triangleq \left\{S \mid T\left\langle S \right\rangle = (m,\{u_{1}, u_{2}, \ldots, u_{n-m} \}) \textrm{ where } \frac{m}{n} \leq \frac{k_0+k_1}{2}\right\}
\end{equation}
\begin{equation} \small
    \mathcal{S}_1 \triangleq \left\{S \mid T\left\langle S \right\rangle = (m,\{u_{1}, u_{2}, \ldots, u_{n-m} \}) \textrm{ where } \frac{k_0+k_1}{2} < \frac{m}{n} < \frac{k_1+k_2}{2}\right\}
\end{equation}
\begin{equation} \small
    \mathcal{S}_1 \triangleq \left\{S \mid T\left\langle S \right\rangle = (m,\{u_{1}, u_{2}, \ldots, u_{n-m} \}) \textrm{ where } \frac{k_1+k_2}{2} \leq \frac{m}{n}\right\}
\end{equation}
$\mathcal{S}_0$ corresponds to the states where number of active users are around the smaller root and $\mathcal{S}_2$ corresponds to the states where number of active users are around the larger root. States in between are grouped as $\mathcal{S}_1$ and thresholds are set at the mid-points between consecutive roots. 
\begin{figure}[h]
\centering
\includegraphics[height=5cm,width=8cm]{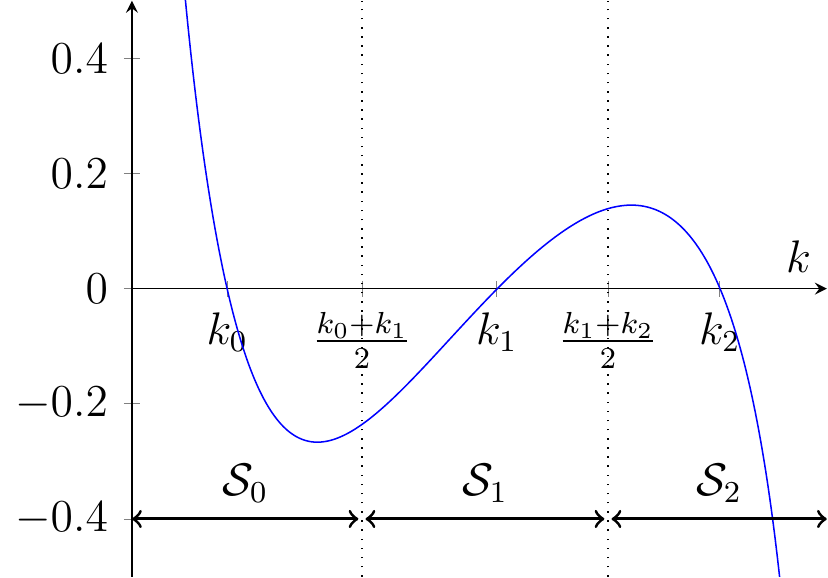}
\caption{State sets}
\end{figure}
\par In the proof of Theorem \ref{thm:2}, it is shown that, if the integral is negative, probability of $\mathcal{S}_1$ and $\mathcal{S}_2$ state sets diminishes as $n$ goes to infinity. By showing that $\mathcal{S}_0$ happens with probability 1, basic principles used for the single peak case can be used again to derive similar results.
\begin{theorem} \label{thm:2}
	Let f(k) have three distinct roots and $k_0,k_1,k_2$ be the roots in increasing order and $m$ be the number of active sources. 
	\begin{enumerate}[i)]
	    \item If 
	\begin{equation} \label{eq:thm2}
	\int\limits_{k_0}^{k_2} f(k)dk < 0
	\end{equation} then for the sequence $\epsilon_n = c n^{-1/3}$ where $c \in \mathbb{R}^+$,
	\begin{equation}
	\Pr(|\frac{m}{n} - k_0| < \epsilon_n) \to 1
	\end{equation}
	\item If 
	\begin{equation}
	\int\limits_{k_0}^{k_2} f(k)dk > 0
	\end{equation} then for the sequence $\epsilon_n = c n^{-1/3}$ where $c \in \mathbb{R}^+$,
	\begin{equation}
	\Pr(|\frac{m}{n} - k_2| < \epsilon_n) \to 1
	\end{equation}
	\end{enumerate}
\end{theorem}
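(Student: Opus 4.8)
The plan is to turn the local multiplicative recursion for $P_m$ into a global ``potential'' description and then run a Laplace-type argument that compares the total mass sitting under the two peaks. First I would work directly with the exact ratio in Lemma~\ref{lemma:truncated}(ii) (a direct computation, identical in spirit to the proof of Theorem~\ref{lemma:f(k)}, shows its logarithm also converges to $f(k)$ for the unpivoted chain). Writing $\ln P_m = \ln P_{0} + \sum_{j=1}^{m}\ln\frac{P_j}{P_{j-1}}$ and comparing this telescoping sum with the Riemann sum $\sum_{j=1}^m f(j/n)$, I expect $\ln P_m \approx \ln P_{m_0} + n\int_{k_0}^{m/n} f(x)\,dx$, where $m_0 = \lfloor n k_0\rfloor$. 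Setting $F(k)=\int_{k_0}^{k} f(x)\,dx$, this reads $P_m \approx P_{m_0}\,e^{\,nF(m/n)}$. The sign pattern of $f$ (positive on $(0,k_0)$ and $(k_1,k_2)$, negative on $(k_0,k_1)$ and $(k_2,1)$) makes $F$ rise to a local maximum at $k_0$, dip to a local minimum at $k_1$, rise to a second local maximum at $k_2$, and fall afterward, so $P_m$ has exactly the two peaks at $nk_0$ and $nk_2$ anticipated in the preceding discussion, with the valley at $nk_1$.

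Next I would quantify the relative height of the two peaks. The gap between the two maxima of $F$ is precisely $F(k_2)-F(k_0)=\int_{k_0}^{k_2} f(x)\,dx$, so the ratio of the peak values of $P_m$ is $\exp\!\big(n\int_{k_0}^{k_2} f\big)$ up to sub-exponential factors. In case (i), where this integral is negative, the $k_2$-peak is exponentially suppressed relative to the $k_0$-peak. To convert this into a statement about the sets $\mathcal{S}_0,\mathcal{S}_1,\mathcal{S}_2$, I would upper bound $\Pr(\mathcal{S}_1\cup\mathcal{S}_2)=\sum_{m\in\mathcal{S}_1\cup\mathcal{S}_2}P_m$ by the number of terms (at most $n$) times the maximal $P_m$ in that range, and lower bound $\Pr(\mathcal{S}_0)$ by the single term $P_{m_0}$. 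Since $F(m/n)\le F(k_2)<F(k_0)$ throughout $\mathcal{S}_1\cup\mathcal{S}_2$, the quotient $\Pr(\mathcal{S}_1\cup\mathcal{S}_2)/\Pr(\mathcal{S}_0)$ is bounded by a polynomial times $\exp\!\big(n(F(k_2)-F(k_0))\big)\to 0$, so $\Pr(\mathcal{S}_0)\to 1$.

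Having reduced to $\mathcal{S}_0$, I would observe that on $[0,(k_0+k_1)/2]$ the function $f$ has the single decreasing root $k_0$, so the conditional law of $m$ given $\mathcal{S}_0$ has exactly the single-peak structure treated in Theorem~\ref{thm:1}. I would then re-run that argument inside $\mathcal{S}_0$: the expansion $F(k)\approx F(k_0)+\tfrac12 f'(k_0)(k-k_0)^2$ with $f'(k_0)<0$ gives $P_m/P_{m_0}\le \exp\!\big(-c\,n\,(m/n-k_0)^2\big)$ outside an $\epsilon_n$-neighbourhood of $k_0$, and since $n\epsilon_n^2=c^2 n^{1/3}\to\infty$ for $\epsilon_n=cn^{-1/3}$, the tail mass vanishes and $\Pr(|m/n-k_0|<\epsilon_n)\to 1$. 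Case (ii) is symmetric: when $\int_{k_0}^{k_2} f>0$ the roles of $k_0$ and $k_2$ swap, $\mathcal{S}_0\cup\mathcal{S}_1$ becomes negligible, and the single-peak argument is applied around $k_2$ inside $\mathcal{S}_2$.

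\textbf{The main obstacle} is making the passage from the telescoping sum to $n\int f$ rigorous and uniform in $m$: Theorem~\ref{lemma:f(k)} is a pointwise limit at a fixed $k$, whereas here $m$ ranges over $\Theta(n)$ indices that themselves depend on $n$. I must therefore control $\max_j\big|\ln\frac{P_j}{P_{j-1}}-f(j/n)\big|$ and ensure the accumulated error stays $o(n)$, hence $o$ of the exponential gap $n\,|\!\int_{k_0}^{k_2} f|$ between the peaks. This demands care near the endpoints $k\to 0$ and $k\to 1$, where $f$ diverges and the convergence is not uniform; there I would either truncate those regions and bound $P_m$ directly from the exact ratio in Lemma~\ref{lemma:truncated}(ii), or establish uniform convergence of the exact log-ratio to $f$ on compact subintervals of $(0,1)$, which follows from that same explicit formula.
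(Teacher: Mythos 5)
Your proposal is correct in its overall architecture and matches the paper's strategy at the top level: both decompose the state space into $\mathcal{S}_0,\mathcal{S}_1,\mathcal{S}_2$, show that $\Pr(\mathcal{S}_1)$ and $\Pr(\mathcal{S}_2)$ vanish under condition (i), and then invoke the single-peak machinery (the paper's Lemma~\ref{lemma:squeeze}, which generalizes Theorem~\ref{thm:1} to a subinterval $(a,b)$) conditionally on $\mathcal{S}_0$. The genuine difference is in how the negligibility of $\mathcal{S}_2$ is established. You compare the two peaks head-on via the potential $F(k)=\int_{k_0}^{k}f$, bounding $\Pr(\mathcal{S}_1\cup\mathcal{S}_2)$ by $n\max P_m\lesssim n\,P_{m_0}e^{nF(k_2)}$ against $\Pr(\mathcal{S}_0)\geq P_{m_0}$; this requires the cumulative Riemann-sum error in $\ln(P_m/P_{m_0})$ to be $o(n)$ and strictly smaller than the exponential gap $n\lvert\int_{k_0}^{k_2}f\rvert$ --- exactly the uniformity obstacle you flag. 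The paper instead introduces an intermediate point $k_3\in(k_0,k_1)$ defined by $\int_{k_3}^{k_2}f=0$, so that $P_{m_3}$ and $P_{m_2}$ are comparable to exponential order, lower-bounds $1-\Pr(\mathcal{S}_2)$ by the mass $P_{m_3}\,n(k_3-k_0)$ on the interval $(k_0,k_3)$ where $P_m$ is monotone, and upper-bounds $\Pr(\mathcal{S}_2)\leq P_{m_2}(2cn^{2/3}+o(1))$ via Lemma~\ref{lemma:squeeze}(i); the ratio then decays like $n^{-1/3}$ times $P_{m_2}/P_{m_3}$, trading the exponential peak comparison for a polynomial interval-width count. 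Your route is more transparent about where condition (i) enters (it is literally the sign of $F(k_2)-F(k_0)$) but demands sharper control of the approximation error; the paper's $k_3$ device is designed to need only that $\tfrac{1}{n}\ln(P_{m_2}/P_{m_3})\to 0$, though even there the residual subexponential factor against $n^{-1/3}$ deserves the same uniformity care you identify. One small slip in your write-up: the claim $F(m/n)\leq F(k_2)$ throughout $\mathcal{S}_1\cup\mathcal{S}_2$ need not hold, since $F$ at the left endpoint $(k_0+k_1)/2$ can exceed $F(k_2)$; what you actually need, and what is true, is $\sup_{\mathcal{S}_1\cup\mathcal{S}_2}F\leq\max\bigl\{F\bigl(\tfrac{k_0+k_1}{2}\bigr),F(k_2)\bigr\}<0=F(k_0)$, because $f<0$ on $(k_0,k_1)$ makes the endpoint value negative as well. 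With that correction and the uniform Riemann-sum control you already identify as necessary, your argument goes through.
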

\begin{proof}
See Appendix \ref{app:thm2}.
\end{proof}
\par The ratio of active users converges to either $k_0$ or $k_2$, depending on the sign of the integral above. If the integral result is positive, this ratio will converge to the larger root, however, this is not desired since larger root is equivalent to more active users at the same time. In order to fully benefit from the age threshold, parameters should be chosen such that $k$ converges to $k_0$.
\par Even though Thm \ref{thm:2} yields a similar result as in Thm \ref{thm:1}, double peak cases may not be as practical as single peak cases in networks with fewer users. For $n$ values that are not large enough, steady state probabilities of $\mathcal{S}_1$ and $\mathcal{S}_2$ may not be small enough to yield useful results. As $k$ values for state sets $\mathcal{S}_1$ and $\mathcal{S}_2$ are larger than that for $\mathcal{S}_0$, these states have more active users, which may lead to the congestion of the channel by having too many users trying to transmit at the same time. This negates the benefit of threshold-ALOHA and should be avoided. Single peak cases do not have $\mathcal{S}_1$ and $\mathcal{S}_2$ sets and system converges more quickly to $k_0$. 
\par In networks with a large number of users, initial conditions must be selected properly to achieve good results. Selecting all users active initially leads to the aforementioned congestion scenarios, slowing down the convergence in Theorem \ref{thm:2}. As $n$ increases, the transition probabilities between state sets decrease exponentially. If the initial state of the system is in $\mathcal{S}_2$, it may be nearly impossible for the network to reach a state in $\mathcal{S}_0$ in a reasonable time period. Initial state of users can be randomized to prevent initial congestion. Despite all these drawbacks, the double peak cases produce asymptotically optimal values and are preferable as network size increases.
\vspace{-0.5cm}
\subsection{Steady state average AoI in the large network limit} \label{sect:ageCalc}
\begin{theorem} \label{thm:ageResult}
    Optimal parameters for threshold-ALOHA in an infinitely large network satisfy the following:
\begin{equation}
\lim_{n \to \infty} \frac{\Gamma^*}{n} = 2.21
\end{equation}
\begin{equation}
\lim_{n \to \infty} n\tau^* = 4.69
\end{equation}
Moreover, the optimal expected AoI at steady state scales as:
\begin{equation}
\lim_{n \to \infty} \frac{\Delta^*}{n} = 1.4169
\end{equation}
\end{theorem}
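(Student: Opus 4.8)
The plan is to obtain the limiting time-average AoI in closed form via a renewal--reward argument on the pivot source, simplify it using the root equation $f(k_0)=0$, and then minimize the result over the two design parameters. First I would use the pivot Markov chain of Fig.~\ref{fig:markov}. A renewal cycle of the pivot's age begins right after a successful transmission and consists of a deterministic climb through the inactive ages $1,2,\ldots,\Gamma-1$ (taking $\Gamma-1$ slots), followed by an active phase whose length is geometric with parameter $q_o$. By the Corollary, as $n\to\infty$ the per-slot success probability seen by an active pivot is $q_o=\lim_{n\to\infty}\sum_m P_m^{(s)}\tau(1-\tau)^m$, and by the concentration of $m/n$ at $k_0$ from Theorems~\ref{thm:1} and~\ref{thm:2}, together with $\tau=\alpha/n$, this evaluates to $q_o=\frac{\alpha}{n}e^{-k_0\alpha}\,(1+o(1))$. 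Writing $\Delta$ as the ratio of the expected area under the age curve per cycle to the expected cycle length, using the first two moments of a geometric variable, and keeping the leading $n^2$ and $n$ terms (with $\Gamma=rn$ and $1/q_o=nT$, where $T=e^{k_0\alpha}/\alpha$), I would obtain
\begin{equation}
\lim_{n\to\infty}\frac{\Delta}{n}=\frac{r^2/2+rT+T^2}{r+T}.
\end{equation}

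Second, I would reduce this to a single clean expression. Rearranging the root condition $f(k_0)=0$ from Theorem~\ref{lemma:f(k)} gives $T(1-k_0)=k_0 r$, equivalently $r=\frac{(1-k_0)e^{k_0\alpha}}{k_0\alpha}$. Substituting $T=\frac{k_0 r}{1-k_0}$, the numerator collapses to $r^2(1+k_0^2)/[2(1-k_0)^2]$ and the denominator to $r/(1-k_0)$, so that
\begin{equation}
\lim_{n\to\infty}\frac{\Delta}{n}=\frac{r(1+k_0^2)}{2(1-k_0)}=\frac{(1+k_0^2)\,e^{k_0\alpha}}{2k_0\alpha}.
\end{equation}

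Third comes the optimization, which I expect to be the crux. Treating $(\alpha,k_0)$ as the effective variables (with $r$ recovered from the root equation), the objective $\frac{(1+k_0^2)e^{k_0\alpha}}{2k_0\alpha}$ has no interior stationary point in $(0,\infty)\times(0,1)$: the only condition $\partial_\alpha=0$ forces $k_0\alpha=1$, and feeding this into $\partial_{k_0}=0$ forces $k_0\to0$, so the unconstrained infimum $e/2$ is merely approached as $k_0\to0,\ \alpha\to\infty$. That limit is infeasible: for $\alpha>4$ the map $k\mapsto\frac{(1-k)e^{k\alpha}}{k\alpha}$ is non-monotone and $f$ has three roots (Prop.~\ref{prop:2}), so by Theorem~\ref{thm:2} the system locks onto the desired small root $k_0$ only while $\int_{k_0}^{k_2}f(k)\,dk\le0$. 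Hence the constrained minimum must lie on the feasibility boundary $\int_{k_0}^{k_2}f(k)\,dk=0$, and I would minimize the objective along this curve. Imposing stationarity of $\frac{(1+k_0^2)e^{k_0\alpha}}{2k_0\alpha}$ subject to $f(k_0)=f(k_1)=f(k_2)=0$ and $\int_{k_0}^{k_2}f\,dk=0$ yields a transcendental system whose numerical solution is $k_0\approx0.19$ (so $k_0\alpha\approx0.9$, matching the value of $G$ in Table~2 and a throughput $k_0\alpha\,e^{-k_0\alpha}$ within $1\%$ of $e^{-1}$), $\alpha^*=4.69$, $r^*=2.21$, and $\Delta^*/n=1.4169$.

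The hard part is this third step: rigorously pinning the optimizer to the triple-root boundary $\int_{k_0}^{k_2}f=0$ (ruling out both the single-root interior and convergence to the large root $k_2$), and accepting that the constants $2.21,\,4.69,\,1.4169$ emerge from solving a transcendental system numerically rather than in closed form. A secondary technical point is justifying the limit interchange that replaces the random active count $m$ by $nk_0$ inside $q_o$ and in the renewal--reward ratio; the $n^{-1/3}$ concentration rate supplied by Theorems~\ref{thm:1}--\ref{thm:2} is what makes this legitimate, since it controls $e^{-\alpha m/n}=e^{-\alpha k_0}(1+o(1))$ uniformly enough to pass to the limit.
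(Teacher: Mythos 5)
Your derivation follows essentially the same route as the paper: you compute the limiting success probability $q_o$ by exploiting the concentration of $m/n$ at $k_0$ (the paper does exactly this, sandwiching $q_0$ between $\gamma_n\tau(1-\tau)^{m_0\pm cn^{2/3}}$ terms to get $nq_0\to\alpha e^{-k_0\alpha}$), and then you extract the mean age from the pivot chain of Fig.~\ref{fig:markov}. Your renewal--reward expression $\frac{r^2/2+rT+T^2}{r+T}$ with $T=e^{k_0\alpha}/\alpha$ is algebraically identical to the paper's $\frac{r^2}{2(r+T)}+T$ obtained from the explicit stationary distribution $\pi_j=(1-q_0)^{\max\{j-\Gamma,0\}}/(\Gamma-1+1/q_0)$, and your reduction via $f(k_0)=0$ to $r\frac{1+k_0^2}{2(1-k_0)}$ reproduces the paper's alternative form exactly. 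Where you go beyond the paper is the final step: the paper simply states that the optimum is found by numerically searching over $(r,\alpha)$ and reports the single-peak and double-peak optima in Table~2, whereas you argue structurally that the unconstrained infimum $e/2$ (notably, the SAT scaling) is approached only along the infeasible direction $k_0\to0$, $\alpha\to\infty$, so the true optimizer must sit on the double-peak feasibility boundary $\int_{k_0}^{k_2}f=0$. That boundary characterization is consistent with the paper's table (the double-peak row with $r^*=2.21$, $\alpha^*=4.69$, $k_0^*\approx0.19$, $G=k_0\alpha\approx0.9$ is the reported global optimum) and is a genuinely more informative account of why TA lands at $1.4169$ rather than $e/2$; the one caveat is that Theorem~\ref{thm:2} guarantees convergence to $k_0$ only for a strictly negative integral, so the boundary value is attained as a limit rather than at an interior feasible point --- a subtlety neither you nor the paper dwells on, and which does not affect the stated constants.
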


\begin{proof}
As can be recalled from the ending of section \ref{sect:pivotMC}, $q_0$ was defined as successful transmission probability of an active source and it has been argued that $q_0$ is independent of the age of the active source. Alternatively, $q_0$ can be expressed as:
\begin{equation}
    q_0 = \mathbb{E} [\tau (1-\tau)^{M-1}]
\end{equation}
where the expectation is over the distribution of $M$, the number of active sources at steady state, which was characterized earlier. 
We firstly prove that
\begin{equation}
\lim_{n \to \infty} n\,q_0 = \alpha e^{-k_0 \alpha}
\end{equation}
Let $\gamma_n$ be defined as: 
\begin{equation} \label{eq:44}
\gamma_n \triangleq \Pr(m_0-cn^{2/3} < M < m_0+cn^{2/3})
\end{equation} 
where $m_0 = k_0 n$. From Theorem \ref{thm:1} and \ref{thm:2}, $\gamma_n \to 1$ as $n \to \infty$. When $M$ is within the bounds given in (\ref{eq:44}), the successful transmission probability is also bounded from both sides. This is used to obtain the following bound:
\begin{equation}
\gamma_n[\tau(1-\tau)^{m_0}(1-\tau)^{-cn^{2/3}}] < q_0 < \gamma_n[\tau(1-\tau)^{m_0}(1-\tau)^{cn^{2/3}}] + (1-\gamma_n)
\end{equation}
As n goes to infinity, both upper and lower bounds converge to $\tau(1-\tau)^{m_0}$. Finally,
\begin{equation}
\lim_{n \to \infty} n\,q_0 = \lim_{n \to \infty} n\tau(1-\tau)^{m_0} = \alpha e^{-k_0 \alpha}
\end{equation}
Value of $q_0$ can be used to compute steady state probabilities of a single source using the model in Fig. \ref{fig:markov}. In this model, states are not truncated and age is equivalent to state. Steady state probability of state $j$ is:
\begin{equation}
\pi_j = \frac{(1-q_0)^{max\{j-\Gamma,0\}}}{\Gamma-1+1/q_0}, \hspace{5mm} j = 1,2,\ldots
\end{equation}
Steady state probabilities are used to derive the following expected time-average AoI expression:
\begin{equation}
\Delta = \frac{\Gamma(\Gamma-1)}{2(\Gamma-1+1/q_0)} + 1/q_0
\end{equation}
Limiting behavior of average AoI is found as:
\begin{equation} \label{eq:ageformula}
\lim_{n \to \infty} \frac{\Delta}{n} = \frac{r^2}{2(r+e^{k_0\alpha}/\alpha)} + e^{k_0\alpha}/\alpha
\end{equation}
(\ref{eq:ageformula}) can alternatively be expressed in terms of $r$ and $k_0$:
\begin{equation} \label{eq:alternativeAge}
\lim_{n \to \infty} \frac{\Delta}{n} = r \frac{k_0^2+1}{2(1-k_0)}
\end{equation}
Average AoI can be optimized by searching values of $r$ and $\alpha$ that minimizes (\ref{eq:ageformula}).
\end{proof}
\par Optimal parameters and steady-state characteristics (expected fraction of active users, expected avg. AoI and throughput) of threshold-ALOHA derived from (\ref{eq:ageformula}) are summarized in Table I and contrasted with those of regular slotted ALOHA as a reference. Note that as threshold-ALOHA has two possible operating regimes, results for these, namely the single peak case and double peak case are separately provided. Note that slotted ALOHA is a special case of threshold-ALOHA where the age threshold  is $\Gamma = 1$ and all users are active regardless of their ages, and thus $r=1/n$ goes to 0, from (\ref{eq:30}).
\vspace{-0.5 cm}
\begin{center}
\begin{displaymath}
\begin{array}{|l|c|c|c|c|c|c|}\hline 
\, & r^* & \alpha^* & k_0^* & G & \Delta^*/n & Throughput \\ \hline 
\textrm{Threshold-ALOHA (single peak)} & 2.17 & 4.43 & 0.2052 & 0.9090 & 1.4226 & 0.3658 \\ \hline
\textrm{Threshold-ALOHA (double peak)} & 2.21 & 4.69 & 0.1915 & 0.8981 & \textbf{1.4169} & 0.3644 \\ \hline 
\textrm{Slotted ALOHA}    & 0   & 1 & 1 & 1 & e \approx 2.7182 & e^{-1} \approx 0.3678\\ \hline \end{array}
\end{displaymath}
\vspace{0 cm}
\captionof{table}{A comparison of optimized parameters of ordinary slotted ALOHA and threshold-ALOHA, and the resulting AoI and throughput values. $r^*$: age-threshold$/n$; $\alpha^*$: transmission probability$\times n$; $k_0^*$: expected fraction of active users; $G$: expected number of transmission attempts per slot; $\Delta^*$: avg. AoI}
\end{center}
\par  In Table I, $G$ refers to the expected number of transmission attempts in a single slot. Under threshold-ALOHA, $G$ is equal to the the product of $\tau$, probability of a transmission attempt, and $nk_0$, number of active users. As a result, $G=k_0 \alpha$ holds. Value of $G$ can be used to compare the throughput of basic slotted ALOHA and threshold-ALOHA. $Ge^{-G}$ is the probability of a successful transmission under both of these policies, since 
\begin{equation}
    \lim_{n \to \infty} nk_0q_0 = k_0\alpha e^{-k_0 \alpha} = Ge^{-G}
\end{equation}
Hence, the probability of a successful transmission is upper bounded by $e^{-1}$, with equality if $G=1$. Under an AoI-optimized selection of $\Gamma$ and $\tau$ for TA, $G$ is equal to $0.8981$, for which the throughput is $0.3644$. Note that the throughput drop from the upperbound is below 1 percent, in return for reduction in AoI to almost half of what is achievable with slotted ALOHA.
\par The AoI in slotted ALOHA under optimal parameters is \cite{yates2017status}:
\begin{equation} \label{eq:aoiSA}
    \Delta = \frac{1}{2} + \frac{1}{\tau(1-\tau)^{n-1}}
\end{equation}
The expression in (\ref{eq:aoiSA}) can be minimized by setting $\tau = 1/n$. Hence, optimal AoI under slotted ALOHA has the following limit \cite{munari2020irsa}:
\begin{equation}
    \lim_{n \to \infty} \frac{\Delta^{SA}}{n} = \lim_{n \to \infty} \frac{1}{2n} +  \frac{1}{\left(1-\frac{1}{n} \right)^{n-1}} = e
\end{equation}
\par Finally, we observe a similarity between threshold-ALOHA and Rivest's stabilized slotted ALOHA~\cite[Sec. 4.2.3]{gallager}. Rivest's algorithm uses collision feedback to estimate the number of active sources, $\hat{m}(t)$, in each time slot and uses this estimate to optimize the probability of transmission, $\tau(t)$, such that $\hat{m}\tau=1$. Rivest's algorithm has also been exploited in \cite{shirin} to achieve \textit{age-based thinning}. Even though threshold-ALOHA does not track the number of active users, we have showed that the number of active users converges in probability to some $m_0=nk_0$ (from (\ref{eq:thm1})), and that under optimized parameter settings, $m_0\tau$ is close to 1, similarly to what Rivest's stabilized ALOHA tries to achieve.
\vspace{-0.5cm}

\section{Extension to Exogenous Arrivals} \label{sect:arrival}
The analysis so far has been concerned with a model where sources generate new packets \textit{at will} when they decide to transmit. We will now discuss how our analysis may be extended to a model involving exogenous packet arrival process: At each time slot, a new packet arrives at source $i$ with probability $\lambda_i$, independently over users and time slots. Arrivals occur frequently enough such that $\lim_{n \to \infty} n \lambda_i = \infty$. If a packet arrival happens at time slot $t$, then $a_i(t) = 1$ and $a_i(t)=0$ otherwise. If, upon an arrival, the source already has a packet that has not been successfully transmitted, the older packet is discarded and replaced by the new one.

In order to provide a lower bound on the performance of TA under these conditions, we relax the policy to one where sources are permitted to make a transmission attempt after $\Gamma$ time slots even if they have not generated a new packet since their last successful transmission. If no new packet has been generated, the packet available at the source is identical to the most recent packet that was sent to the destination and another successful transmission of this packet would not improve the age. However, this assumption is useful for the extension of our findings onto this case and its analysis provides an upper bound on the optimal age due to its inferiority.

Note that transmission decisions are independent of the arrival times. As packet arrival times do not influence when sources will make a transmission attempts and vice versa, packet generation times and delivery times are independent of each other.

We define the age of flow $i$ at the source as $A_i^{s}[t]$ and the age of flow $i$ at the destination as $A_i[t]$. The ages refer to time between the current time (synchronized throughout the network) and the creation time of the most recent packet available at the respective location. As such, $A_i^s[t]$ and $A_i[t]$ evolve as:
\begin{equation} \small
    A_i^{s}[t] = \left\{
    \begin{array}
    {ll}A_i^s[t-1]+1, & a_i(t)=0 \\
    0, & a_i(t) = 1
    \end{array}
    \right.
\end{equation}
and 
\begin{equation} \small
    A_{i}[t] = \left\{
    \begin{array}
    {ll}A_i^s[t-1]+1, & \text{ source } i \text { transmits successfully at time slot } t-1 \\
    A_i[t-1]+1, & \text { otherwise }
    \end{array}
    \right.
\end{equation}
We define $U_k^{(i)}$ to be the time of $k^{\textrm{th}}$ successful transmission made by source $i$.
Finally, $T_i[t]$ is defined as the time elapsed since the last successful transmission by source $i$ was made, corresponding to the the age process of our original model.
\begin{equation}
    T_i[t] = t - \max\{U_k^{(i)} : U_k^{(i)} < t\}
\end{equation}
As a result, $A_i[t]$ can also be formulated as:
\begin{equation} \label{eq:d_i[t]}
\begin{aligned}
    A_i[t] &= A_i^s[t - T_i[t]] + T_i[t] =  A_i^s\left[\max\{U_k^{(i)} : U_k^{(i)} < t\}\right] + T_i[t]
\end{aligned}
\end{equation}
We refer to the average of $T_i[t]$ as $\Delta_i^{\textrm{TA}}$, which was formulated as the average age of the original model in (\ref{eq:ageformula}).
\begin{equation} \label{eq:T-avg} \small
    \Delta_i^{\textrm{TA}} =\lim_{T \rightarrow \infty} \frac{1}{T} \sum_{t=1}^{T} T_{i}[t]
\end{equation}
Let $I_i[k]$ be the time between $(k-1)^{\textrm{th}}$ and $k^{\textrm{th}}$ successful transmissions made by source $i$. Then,
{\small \begin{align*} 
    \Delta_{i} &=\lim_{T \rightarrow \infty} \frac{1}{T} \sum_{t=1}^{T} A_{i}[t] \\
    &\stackrel{(a)}{=} \Delta_i^{\textrm{TA}} + \lim_{T \rightarrow \infty} \frac{1}{T} \sum_{t=1}^{T} A_i^s\left[\max\{U_k^{(i)} : U_k^{(i)} < t\}\right] \\
    &= \Delta_i^{\textrm{TA}} + \lim_{K \rightarrow \infty} \frac{\sum_{k=1}^{K} \sum_{l=1}^{I_i[k]} A_i^s[U_k^{(i)}]}
    {\sum_{k=1}^{K} I_i[k]} \\
    &= \Delta_i^{\textrm{TA}} + \lim_{K \rightarrow \infty} \frac{\sum_{k=1}^{K} A_i^s[U_k^{(i)}]I_i[k]}
    {\sum_{k=1}^{K} I_i[k]} \\
    &= \Delta_i^{\textrm{TA}} + \frac{\mathbb{E}\left[A_i^s[U_k^{(i)}]I_i[k]\right]}
    {\mathbb{E}\left[I_i[k]\right]} \stackrel{(b)}{=} \Delta_i^{\textrm{TA}} + \mathbb{E}[A_i^s]
\end{align*}}%
where (a) follows from (\ref{eq:d_i[t]}) and (\ref{eq:T-avg}), and (b) follows from the independence between transmission policy and arrival processes.
Average age of the packet at the source is $\mathbb{E}[A_i^s]$ and is equal to $1/\lambda_i$ \cite{gallager2012discrete}. The optimal value of $\Delta_i^{\textrm{TA}}$ was shown to be asymptotically $1.4169n$ while $1/\lambda_i$ diminishes compared to $\Delta_i^{\textrm{TA}}$, since $\lim_{n \to \infty} \frac{1/\lambda_i}{n} = 0$. As a result, optimal age can be upper bounded by $1.4169n$ in the limit of infinite $n$ since this average age is asymptotically achievable by the modified threshold-ALOHA policy where the policy is worsened by forcing sources to make a transmission attempt when they don't have a fresh packet available. On the other hand, optimal age is lower bounded by $1.4169n$ as well since having a fresh packet available to send at all times is guaranteed to not increase the average age. Hence,
\begin{equation}
    \lim_{n \rightarrow \infty} \frac{\Delta^{*}}{n} = 1.4169
\end{equation}
\section{Numerical Results and Discussion} \label{sect:numerical}

In this section, we present numerical plots and simulation results to illustrate our theoretical findings and to perform comparisons with related policies.  In Fig. \ref{fig:AoIvsN}, optimal AoI results can be observed under threshold-ALOHA, slotted ALOHA and stationary age-based thinning (SAT) policy presented in \cite{shirin}. Simulations of SAT and threshold-ALOHA were performed under different $n$ values ranging from $50$ to $1000$ and run for $10^7$ time slots. Initial states of the users were randomized so that a bias from the initial congestion of having too many active users could be prevented and the decentralized structure of the algorithm could be preserved.
Note that avg. AoI of threshold-ALOHA rises with slope $1.4169$ with network size which is almost the same as SAT and roughly half the slope of slotted ALOHA.
\begin{figure} [h]
\begin{subfigure}{.5\textwidth}
\includegraphics[scale=0.85]{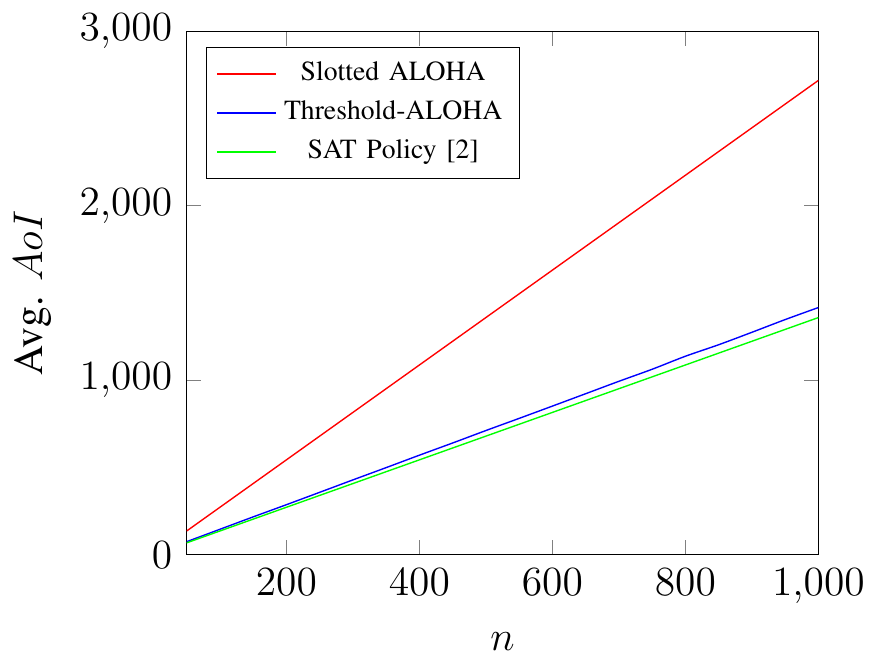}
\caption{}
\label{fig:AoIvsN}
\end{subfigure}%
\begin{subfigure}{.5\textwidth}
\centering
\includegraphics[scale=0.85]{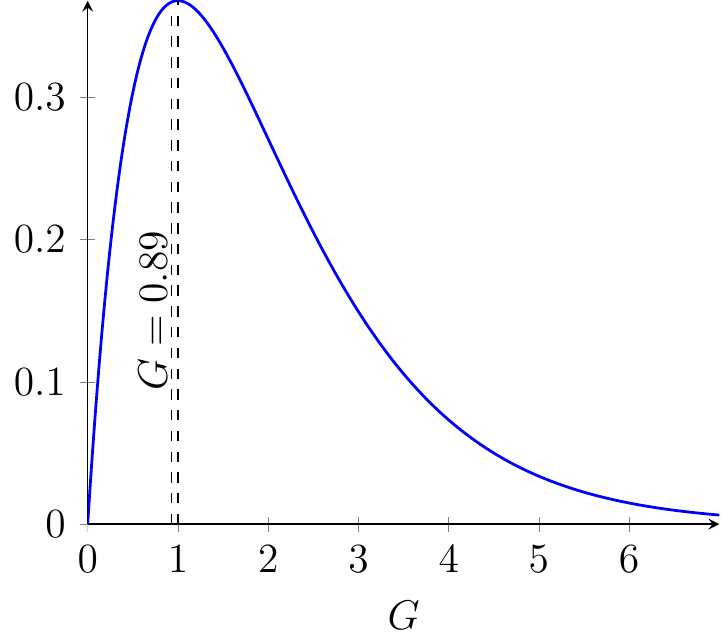}
\caption{}
\label{fig:ThrVsN}
\end{subfigure}
\caption{\small{(a) Optimal time average $AoI$ vs $n$, number of sources, under Slotted ALOHA (computed from (\ref{eq:aoiSA})), threshold-ALOHA (simulated) and SAT Policy~\cite{shirin} (simulated) (b) Throughput vs G.}}
\end{figure}
\par We showed above that threshold-ALOHA keeps the number of active users at any time at steady state at about \textit{one-fifth of all users} (see Table I), with optimal parameter settings. This enables the users to utilize the channel more efficiently, approaching throughput of $e^{-1}$ packets per slot.  Fig. \ref{fig:ThrVsN}, plots $Ge^{-G}$, where $G=1$ has been marked as the throughput optimal operating point of ordinary slotted ALOHA and $G=0.89$ has been marked for threshold-ALOHA. The corresponding throughput values are $e^{-1}$and $0.3658$, respectively, which differ by less than $1\%$. Hence, \textit{threshold-ALOHA nearly halves avg. AoI while maintaining a near-optimal throughput.}
\section{Conclusion and future directions} \label{sect:conclusion}

We have presented a comprehensive steady-state analysis of \textit{threshold-ALOHA}, which is an age-aware modification of slotted ALOHA proposed in~\cite{doga}. In threshold-ALOHA each terminal suspends its transmissions until its age exceeds a certain threshold, and once age exceeds the threshold, it attempts transmission with constant probability $\tau$, just as in standard slotted ALOHA. We have analyzed time-average expected age attained, and explored its scaling with network size. We adopted the \textit{generate-at-will} model where each time a user attempts transmission, it generates a fresh packet, accordingly every time a successful transmission occurs, the age of the corresponding flow is reset to 1. We have firstly derived the steady state solutions of DTMC that was formed in \cite{doga} and subsequently found the distribution of number of active users. We have shown that the policy converges to running slotted ALOHA with fewer sources: on average about one fifth of the users is active at any time. We then formulated an expression for avg. AoI and derived optimal parameters of the policy. This resolved the conjectures in~\cite{doga} by confirming that the optimal age threshold and transmission probability are $2.2n$ and $4.69/n$, respectively. We have found optimal avg. AoI to be $1.4169n$, which is half of what is achievable using slotted ALOHA while the loss from the maximum achievable throughput of $e^{-1}$ is below $1\%$.

The novel methodology developed in this paper can be extended to analyze the performance of threshold ALOHA under conditions such as lossy channels (nonzero probability of decoding error), different types of exogenous arrival processes, or the availability of advanced physical layer techniques including contention resolution~\cite{liva2010graph} where the channel encoder/decoder facilitates the mutual decoding of a certain number of colliding packets.


%

\vspace{-0.5cm}
\appendices
\section{Proof of Lemma \ref{lemma:3}} \label{app:lemma-f(k)}
\par We firstly prove that properties of Lemma hold for $s=1,2,\ldots,\Gamma-1$. Property $(i)$ and $(ii)$ follows from Prop. \ref{prop:pivot} (i), $\pi(S_1^{\textbf{P}}) = \pi_{m_1}$ and $\pi(S_2^{\textbf{P}}) = \pi_{m_2}$ . Property $(iii)$ follows from the same property, albeit not directly:
\begin{equation} \label{eq:45} \small
\begin{aligned}
    \lim_{n \to \infty} \frac{\pi(S_1^{\textbf{P}})}{n\,\pi(S_2^{\textbf{P}})} &= \lim_{n \to \infty} \frac{\pi_{m_1}}{n\pi_{{m_1}-1}} \stackrel{(a)}{=} \lim_{n \to \infty} \frac{1-({m_1}-1)\tau(1-\tau)^{{m_1}-2}}{n\tau(1-\tau)^{{m_1}-1}} = \frac{e^{k\alpha}}{\alpha} - k
\end{aligned}
\end{equation} 
where (a) follows from (\ref{eq:piM}). Next, we calculate the steady state probabilities of the states in $\textbf{P}^{\Gamma}$ where $s = \Gamma$. We firstly show that $\pi(S_1^{\textbf{P}}) = \pi_{m_1}$. Assuming that the current state is $S_1^{\textbf{P}}$, if $1 \not\in \{u_{1}, u_{2}, \ldots, u_{n-m_1-1} \}$, then previous state must be of one of the following types:
\begin{itemize} \small
    \item $(\Gamma-1,m_1,\{u_{1}-1, u_{2}-1, \ldots, u_{n-m_1-1}-1 \})$
    \item $(\Gamma-1,m_1-1,\{\Gamma-1,u_{1}-1, u_{2}-1, \ldots, u_{n-m_1-1}-1 \})$
\end{itemize}
Steady state probability expression for states of these types are given in Prop. \ref{prop:pivot} (ii). Steady state probabilities for states of the first type and second type are $\pi_{m_1}$ and $\pi_{m_1-1}$, respectively.
Steady state probability of $S_1^{\textbf{P}}$ can be derived using the steady state probabilities of preceding states along with their transition probabilities:
\begin{equation} \small \label{eq:36}
\begin{aligned}
\pi(S_1^{\textbf{P}}) &= \pi_{m_1}(1-{m_1}\tau(1-\tau)^{{m_1}-1}) +  \pi_{{m_1}-1}{m_1}(1-({m_1}-1)\tau(1-\tau)^{{m_1}-2}) = \pi_{m_1}
\end{aligned}
\end{equation}
Resulting $\pi_{m_1}$ is obtained through the ratio given in (\ref{eq:piM}). Now, we calculate the steady state probability for the case $1 \in \{u_{1}, u_{2}, \ldots, u_{n-m_1-1} \}$, following similar steps. W.l.o.g., assume that $u_{n-m_1-1} = 1$. Then previous state must be one of the following types:
\begin{itemize} \small
    \item $(\Gamma-1,m_1+1,\{u_{1}-1, u_{2}-1, \ldots, u_{n-m_1-2}-1 \})$
    \item $(\Gamma-1,m_1,\{\Gamma-1,u_{1}-1, u_{2}-1, \ldots, u_{n-m_1-2}-1 \})$
\end{itemize}
Steady state probabilities for states of the first type and second type are $\pi_{m_1+1}$ and $\pi_{m_1}$, respectively. Steady state probability of $S_1^{\textbf{P}}$ is derived as:
\begin{equation} \small \label{eq:37}
\begin{aligned}
\pi(S_1^{\textbf{P}}) &= \pi_{m_1+1}\tau(1-\tau)^{m_1} + \pi_{m_1}({m_1})\tau(1-\tau)^{{m_1}-1} = \pi_{m_1}
\end{aligned}
\end{equation}
\par Due to symmetry, $\pi(S_2^{\textbf{P}}) = \pi_{m_2}$. Property $(i)$ and $(ii)$ follows from Prop. \ref{prop:pivot} (i) and Property $(iii)$ follows from (\ref{eq:45}). Finally, we prove that properties of the Lemma hold for $\forall s \geq \Gamma$ by induction. Initial case $s = \Gamma$ has been covered above. We assume $s>\Gamma$ and that above properties hold for all states of $\mathbf{P}^\Gamma$ in which age of the pivot source is smaller than $s$.
Then we prove property $(i)$ in two separate cases: \\
\textbf{Case 1 - $1 \not\in \{u_{1}, u_{2}, \ldots, u_{n-m-1} \}$}.  
In order to make the equations easier to read, we shorten steady state probability expressions in the following way:
{\small \begin{align}
    \pi_m^{(s)}  &= \pi(s,m,\{u_{1}, u_{2}, \ldots, u_{n-m-1} \}) = \pi(S_1^{\textbf{P}}) \\
    \pi_m^{(s-1)} &= \pi(s-1,m,\{u_{1}-1, u_{2}-1, \ldots, u_{n-m-1}-1 \}) = \pi(Q_1^{\textbf{P}}) \\
    \pi_{m-1}^{(s-1)} &= \pi(s-1,m-1,\{\Gamma-1,u_{1}-1, u_{2}-1, \ldots, u_{n-m-1}-1 \})
\end{align}}
Steady state probabilities of the states that can precede a state of type $(s,m,\{u_{1}, u_{2}, \ldots, u_{n-m-1} \})$ are $\pi_m^{(s-1)}$ or $\pi_{m-1}^{(s-1)}$. Value of $\pi_m^{(s)}$ is calculated as:
\begin{equation} \small
\pi_m^{(s)} = \pi_m^{(s-1)}(1-(m+1)\tau(1-\tau)^m) + \pi_{m-1}^{(s-1)}(m+1)(1-m\tau(1-\tau)^{m-1})
\end{equation}
Then,
\begin{equation} \small
\begin{aligned}
    \lim_{n \to \infty} \frac{\pi_m^{(s)}}{\pi_m^{(s-1)}} &= \lim_{n \to \infty} \frac{\pi_m^{(s-1)}(1-(m+1)\tau(1-\tau)^m) + \pi_{m-1}^{(s-1)}(m+1)(1-m\tau(1-\tau)^{m-1})}{\pi_m^{(s-1)}} \\
    &= \lim_{n \to \infty} 1-(m+1)\tau(1-\tau)^m + \frac{\pi_{m-1}^{(s-1)}}{\pi_m^{(s-1)}}(m+1)(1-m\tau(1-\tau)^{m-1}) \\
    &= \lim_{n \to \infty} 1-\frac{m+1}{n}(n\tau)(1-\tau)^m + \frac{n\,\pi_{m-1}^{(s-1)}}{\pi_m^{(s-1)}}\frac{m+1}{n}(1-\frac{m}{n}(n\tau)(1-\tau)^{m-1}) \\
    &\stackrel{(a)}{=} \lim_{n \to \infty} 1-k\alpha e^{-k\alpha} + \frac{1}{\frac{e^{k\alpha}}{\alpha} - k}k(1-k\alpha e^{-k\alpha}) = 1
\end{aligned}
\end{equation}
where $(a)$ follows from property $(iii)$.
\\
\textbf{Case 2 - $1 \in \{u_{1}, u_{2}, \ldots, u_{n-m-1} \}$}. W.l.o.g. let $u_{n-m-1}$ be $1$. In order to make the equations easier to read, we shorten steady state probability expressions in the following way:
\begin{align}
    \pi_m^{(s)} &= \pi(s,m,\{u_{1}, u_{2}, \ldots, u_{n-m-2},1 \}) = \pi(S_1^{\textbf{P}}) \\
    \pi_m^{(s-1)} &= \pi(s-1,m,\{\Gamma-1,u_{1}-1, u_{2}-1, \ldots, u_{n-m-2}-1 \}) = \pi(Q_1^{\textbf{P}}) \\
    \pi_{m+1}^{(s-1)} &= \pi(s-1,m+1,\{u_{1}-1, u_{2}-1, \ldots, u_{n-m-2}-1 \})
\end{align}
Steady state probabilities of the states that can precede a state of type $(s,m,\{u_{1}, u_{2}, \ldots, u_{n-m-2},1 \})$ are $\pi_m^{(s-1)}$ or $\pi_{m+1}^{(s-1)}$. Value of $\pi_m^{(s)}$ is calculated as:
\begin{equation} \small
\pi_m^{(s)} = \pi_{m+1}^{(s-1)}\tau(1-\tau)^m + \pi_m^{(s-1)}m\tau(1-\tau)^{m-1}
\end{equation}
Then,
\begin{equation} \small
\begin{aligned}
    \lim_{n \to \infty} \frac{\pi_m^{(s)}}{\pi_m^{(s-1)}} &= \lim_{n \to \infty} \frac{\pi_{m+1}^{(s-1)}\tau(1-\tau)^m + \pi_m^{(s-1)}m\tau(1-\tau)^{m-1}}{\pi_m^{(s-1)}} = \lim_{n \to \infty} \frac{\pi_{m+1}^{(s-1)}}{\pi_m^{(s-1)}}\tau(1-\tau)^m + m\tau(1-\tau)^{m-1} \\
    &= \lim_{n \to \infty} \frac{\pi_{m+1}^{(s-1)}}{n\,\pi_m^{(s-1)}}(n\tau)(1-\tau)^m + \frac{m}{n}(n\tau)(1-\tau)^{m-1} = \lim_{n \to \infty} (\frac{e^{k\alpha}}{\alpha} - k)\alpha e^{-k\alpha} + k\alpha e^{-k\alpha} = 1
\end{aligned}
\end{equation}
Thus, the proof of property $(i)$ is completed. Next, for the case $m_1=m_2$,
\begin{equation} \small
\begin{aligned}
    \lim_{n \to \infty} \frac{\pi(S_1^{\textbf{P}})}{\pi(S_2^{\textbf{P}})} &= \lim_{n \to \infty} \frac{\pi(S_1^{\textbf{P}})}{\pi(Q_1^{\textbf{P}})}\frac{\pi(Q_2^{\textbf{P}})}{\pi(S_2^{\textbf{P}})}\frac{\pi(Q_1^{\textbf{P}})}{\pi(Q_2^{\textbf{P}})} \stackrel{(a)}{=} \lim_{n \to \infty} \frac{\pi(Q_1^{\textbf{P}})}{\pi(Q_2^{\textbf{P}})} \stackrel{(b)}{=} 1
\end{aligned}
\end{equation}
where (a) follows from property $(i)$ and (b) follows from property $(ii)$ since state of the pivot source for states $Q_1^{\textbf{P}}$ and $Q_2^{\textbf{P}}$ is $s-1$ and number of active sources is $m_1$ and $m_2$ respectively. \\
Similarly, for the case $m_1=m_2+1$,
\begin{equation} \small
\begin{aligned}
    \lim_{n \to \infty} \frac{\pi(S_1^{\textbf{P}})}{\pi(n\,S_2^{\textbf{P}})} &= \lim_{n \to \infty} \frac{\pi(S_1^{\textbf{P}})}{\pi(Q_1^{\textbf{P}})}\frac{\pi(Q_2^{\textbf{P}})}{\pi(S_2^{\textbf{P}})}\frac{\pi(Q_1^{\textbf{P}})}{n\,\pi(Q_2^{\textbf{P}})} \stackrel{(a)}{=} \lim_{n \to \infty} \frac{\pi(Q_1^{\textbf{P}})}{n\,\pi(Q_2^{\textbf{P}})} \stackrel{(b)}{=} \frac{e^{k\alpha}}{\alpha} - k
\end{aligned}
\end{equation}
where (a) follows from property $(i)$, (b) follows from property $(iii)$ since state of the pivot source for states $Q_1^{\textbf{P}}$ and $Q_2^{\textbf{P}}$ is $s-1$ and number of active sources is $m_1$ and $m_2$ respectively.
\vspace{-0.5cm}
\section{Proof of Proposition \ref{prop:2}} \label{app:prop2}
To prove that $f(k)$ has at least 1 root, it is sufficient to observe that $f(0^+)=+\infty$ and $f(1^-)=-\infty$. Since $f(k)$ is continuous in (0,1) domain, $f(k)$ has at least one root.
\par To prove that $f(k)$ has at most 3 roots, we formulate $r$ in terms of $\alpha$ and $k$ when $f(k)=0$.
\begin{equation} \small
    f(k) = \ln(\frac{e^{k\alpha}}{k\alpha} - 1) + \ln(\frac{r}{k+r-1}-1) = 0
\end{equation}
\begin{equation} \label{eq:90} \small
    r = \frac{e^{k\alpha}(1-k)}{k\alpha}
\end{equation}
\begin{equation} \small
    \frac{dr}{dk} = \frac{e^{k\alpha}}{k^2 \alpha} (-\alpha k^2+\alpha k-1)
\end{equation}
Since $\frac{dr}{dk}$ has at most two roots, there can be at most 3 different values of $k$ that satisfy (\ref{eq:90}). These are the only possible roots of $f(k)$. Hence, $f(k)$ has at most 3 roots.

\vspace{-0.5cm}
\section{Proof of Theorem \ref{thm:1}} \label{app:thm1}
We shall prove the following Lemma, from which Theorem \ref{thm:1} follows as a special case for $(a,b) = (0,1)$.
\begin{lemma} \label{lemma:squeeze}
For $(a,b) \subseteq (0,1)$, let $k_0$ be the only root of $f(k)$ in the interval $(a,b)$ and $f'(k_0) < 0$, $lim_{k \to a} f(k) \neq 0$, $lim_{k \to b} f(k) \neq 0$. Then for the sequence $\epsilon_n = c n^{-1/3}$ where $c \in \mathbb{R}^+$,
\begin{enumerate}[i)]
    \item 
    \begin{equation} \small
    \frac{\Pr\left(\left|\frac{m}{n} - k_0\right| \geq \epsilon_n, \frac{m}{n} \in (a,b))\right)}{P_{nk_0}} \to 0
\end{equation}
    \item 
\begin{equation} \small
    \Pr\left(\left|\frac{m}{n} - k_0\right| < \epsilon_n \mid \frac{m}{n} \in (a,b)\right) \to 1
\end{equation}
\end{enumerate}
\end{lemma}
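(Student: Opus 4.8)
The plan is to show that the distribution $\{P_m\}$ of the number of active sources is sharply unimodal with its peak near $m \approx n k_0$, and that the mass lying outside a window of half-width $\epsilon_n$ about $k_0$ is negligible relative to the peak value $P_{nk_0}$. The starting point is the exact ratio (\ref{eq:PmRatio}) of Lemma \ref{lemma:truncated}: after the substitutions $\tau=\alpha/n$, $\Gamma=rn$, $m=kn$ of (\ref{eq:30}), $\ln(P_m/P_{m-1})$ equals a finite-$n$ function $f_n(k)$ whose pointwise limit is $f(k)$ by Theorem \ref{lemma:f(k)}. The hypotheses supply the qualitative picture I will exploit: since $k_0$ is the unique root of $f$ in $(a,b)$ and $f'(k_0)<0$, we have $f>0$ on $(a,k_0)$ and $f<0$ on $(k_0,b)$, so for large $n$ the ratios $P_m/P_{m-1}$ exceed $1$ to the left of $nk_0$ and fall below $1$ to its right; hence $P_m$ rises to a peak near $nk_0$ and decays away from it. Both parts (i) and (ii) then reduce to bounding $P_m/P_{nk_0}$ at the window edge $|m/n-k_0|=\epsilon_n$.

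For the quantitative heart I would telescope and compare the resulting Riemann sum to the integral of $f$:
\begin{equation}
\ln \frac{P_m}{P_{nk_0}} = \sum_{j=nk_0+1}^{m} \ln \frac{P_j}{P_{j-1}} = \sum_{j=nk_0+1}^{m} f_n(j/n) \approx n\int_{k_0}^{m/n} f(k)\,dk .
\end{equation}
Taylor-expanding $f$ about its simple root, $f(k)\approx f'(k_0)(k-k_0)$, yields
\begin{equation}
\ln \frac{P_m}{P_{nk_0}} \lesssim \frac{n f'(k_0)}{2}\Big(\frac{m}{n}-k_0\Big)^2 .
\end{equation}
Evaluated at the edge $|m/n-k_0|=\epsilon_n=c\,n^{-1/3}$, the exponent behaves like $\tfrac12 f'(k_0)c^2 n^{1/3}\to-\infty$ since $f'(k_0)<0$. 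This is precisely where the scaling $\epsilon_n=c\,n^{-1/3}$ is forced: I need $\epsilon_n\to0$ so the window collapses to $k_0$, yet $n\epsilon_n^2\to\infty$ so that the boundary ratio $P_{n(k_0\pm\epsilon_n)}/P_{nk_0}$ decays like $e^{-\Theta(n^{1/3})}$.

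To assemble (i), I use unimodality: the terms $P_m$ are decreasing for $m/n>k_0+\epsilon_n$ and increasing for $m/n<k_0-\epsilon_n$, so each tail sum is at most its number of terms (no more than $n$) times the boundary value, giving
\begin{equation}
\frac{\Pr\!\left(\left|\tfrac{m}{n}-k_0\right|\ge\epsilon_n,\ \tfrac{m}{n}\in(a,b)\right)}{P_{nk_0}} \le 2n\, e^{-\Theta(n^{1/3})} \to 0 .
\end{equation}
Part (ii) is then immediate: writing the conditional probability as $1$ minus the ratio of the tail mass to $\Pr(\tfrac{m}{n}\in(a,b))$, and bounding the denominator below by the single term $P_{nk_0}$ (legitimate because $k_0\in(a,b)$), the subtracted ratio is dominated by the quantity in (i) and vanishes. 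Taking $(a,b)=(0,1)$, where $f(0^+)=+\infty$ and $f(1^-)=-\infty$ meet the endpoint hypotheses, recovers Theorem \ref{thm:1}.

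The main obstacle is upgrading the \emph{pointwise} limit $f_n(k)\to f(k)$ of Theorem \ref{lemma:f(k)} into control that is \emph{uniform} over the whole tail while $m$ and $n$ grow together. Two points need care. First, the telescoped sum has up to $\Theta(n)$ terms, so per-term errors of size $O(1/n)$ in replacing $f_n$ by $f$ could a priori accumulate to $O(1)$; I would therefore establish $f_n(k)=f(k)+O(1/n)$ uniformly on compact subsets of $(a,b)$ — the $n$-dependence enters only through $(1-\alpha/n)^{kn}\to e^{-k\alpha}$ and the factorial ratios in $N_m$, all with controllable corrections — and verify these do not overturn the dominant $\Theta(n^{1/3})$ term. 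Second, near $k_0$ the limit $f$ is small, so the sign of $\ln(P_m/P_{m-1})$ is not pinned down by $f$ alone; but this ambiguous zone has width $o(\epsilon_n)$ and sits \emph{inside} the window, so it does not affect the tail estimate. The endpoint conditions $\lim_{k\to a}f\neq0$ and $\lim_{k\to b}f\neq0$ are what prevent mass from leaking out at the ends of $(a,b)$ and make the bracketing of the Riemann sum by the integral valid up to the boundary.
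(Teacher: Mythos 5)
Your proposal is correct and follows essentially the same route as the paper's proof: telescoping $\ln(P_m/P_{m-1})$ into a Riemann sum for $n\int_{k_0}^{k}f$, Taylor-expanding $f$ about its simple root to obtain a boundary decay of order $\exp(f'(k_0)n\epsilon_n^2/2)$, and choosing $\epsilon_n=cn^{-1/3}$ precisely so that $\epsilon_n\to 0$ while $n\epsilon_n^2\to\infty$, with part (ii) following by lower-bounding $\Pr(\frac{m}{n}\in(a,b))$ by the single term $P_{nk_0}$. The only material difference is that you bound each tail by (number of terms) times (boundary value) via unimodality, whereas the paper sums a geometric series using $f(k)<f(k_0\pm\epsilon)<0$ beyond the window edge; both are dominated by the $e^{-\Theta(n^{1/3})}$ factor, and your explicit attention to the uniformity of the finite-$n$ ratio's convergence to $f$ addresses a point the paper leaves implicit.
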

\begin{proof}
Firstly, we make the observation that if $f(k)$ satisfies above conditions, then there exists a positive $\epsilon$ small enough such that for $\forall k \in (k_0+\epsilon,b)$, $f(k)<f(k_0+\epsilon)$.\\
From this, for $b > k = 
\frac{m}{n} > k_0+\epsilon$,
\begin{equation} \small
\ln(\frac{P_m}{P_{m-1}}) = f(k) < f(k_0+\epsilon)
\end{equation}
\begin{equation} \small
P_m < P_{m-1}\exp(f(k_0+\epsilon))
\end{equation}
\begin{equation} \small
P_m < P_{m-l}\exp(f(k_0+\epsilon))^l
\end{equation}
\begin{equation} \small
\sum_{i=n(k_0+\epsilon)}^{nb} P_i < \sum_{i=n(k_0+\epsilon)}^{nb} P_{n(k_0+\epsilon)} \exp(f(k_0+\epsilon))^{i-n(k_0+\epsilon)} < \frac{P_{n(k_0+\epsilon)}}{1 - \exp(f(k_0+\epsilon))}
\end{equation}
\begin{equation} \small
\Pr(\frac{m}{n} - k_0 \geq \epsilon, \frac{m}{n} \in (a,b)) < \frac{P_{n(k_0+\epsilon)}}{1 - \exp(f(k_0+\epsilon))}
\end{equation}
Similar approach can be used to derive 
\begin{equation} \small
\Pr(\frac{m}{n} - k_0 \leq -\epsilon, \frac{m}{n} \in (a,b)) < \frac{P_{n(k_0-\epsilon)}}{1 - \exp(f(k_0-\epsilon))}
\end{equation}
From the Riemann sum over $f(k)$, ($m_0 \triangleq nk_0$)
\begin{equation} \small
\ln P_{n(k_0+\epsilon)} - \ln P_{m_0} = \sum_{i=m_0+1}^{n(k_0+\epsilon)} \ln P_i - \ln P_{i-1} = \sum_{i=m_0+1}^{n(k_0+\epsilon)} f(i/n) \leq n \int\limits_{k_0}^{k_0+\epsilon} f(k)dk
\end{equation}
As a result, the following bound is derived:
\begin{equation} \small \label{eq:81}
\Pr(\frac{m}{n} - k_0 \geq \epsilon, \frac{m}{n} \in (a,b)) \leq \frac{P_{m_0}\exp(n \int\limits_{k_0}^{k_0+\epsilon} f(k)dk)}{1 - \exp(f(k_0+\epsilon))}
\end{equation}
The above analysis can be repeated for the negative part to obtain the following bound:
\begin{equation} \small \label{eq:82}
\Pr(\frac{m}{n} - k_0 \leq -\epsilon, \frac{m}{n} \in (a,b)) < \frac{P_{m_0}\exp(n \int\limits_{k_0-\epsilon}^{k_0} f(k)dk)}{1 - \exp(f(k_0-\epsilon))}
\end{equation}
Next, Taylor series expansion is used to linearize $f(k_0+\epsilon)$.
\begin{equation} \small
f(k_0+\epsilon) = f(k_0) + f'(k_0)\epsilon + o(\epsilon)
\end{equation}
For small $\epsilon$, $f(k_0+\epsilon) \approx f'(k_0)\epsilon$. The bound from (\ref{eq:81}) becomes,
\begin{equation} \small \label{eq:84}
\Pr(\frac{m}{n} - k_0 \geq \epsilon, \frac{m}{n} \in (a,b)) < P_{m_0}\frac{\exp(f'(k_0)n\epsilon^2/2)}{1 - \exp(f'(k_0)\epsilon)}
\end{equation}
We want to choose an $\epsilon_n$ sequence such that both the sequence and the above bound converges to 0. $\epsilon_n = c n^{-1/3}$ satisfies this condition since,
\begin{equation} \small
\lim_{n\to\infty} \frac{\exp(f'(k_0)n\epsilon^2/2)}{1 - \exp(f'(k_0)\epsilon)} = \lim_{n\to\infty} \frac{\exp(c^2f'(k_0)n^{1/3}/2)}{1 - \exp(cf'(k_0)n^{-1/3})} = 0
\end{equation}
Similar arguments can be used for the negative side and sum of (\ref{eq:82}) and (\ref{eq:84}) gives the following. 
\begin{equation} \small
\frac{\Pr\left(\left|\frac{m}{n} - k_0\right| \geq \epsilon_n, \frac{m}{n} \in (a,b))\right)}{P_{m_0}} \to 0
\end{equation}
Then, since $\Pr (\frac{m}{n} \in (a,b)) \geq P_{m_0}$,
\begin{equation} \small
    \Pr\left(\left|\frac{m}{n} - k_0\right| \geq \epsilon_n \mid k \in (a,b)\right) \to 0
\end{equation}
The equation above is equivalent to the property $(ii)$.
\end{proof}
\vspace{-1cm}
\section{Proof of Theorem \ref{thm:2}} \label{app:thm2}
We only give the proof for the first part of the theorem. Second part follows similarly, by switching $S_0$ and $k_0$ with $S_2$ and $k_2$. Under the conditions given in part (i), we first prove that \vspace{-0.2cm}
	\begin{equation} \small
	\Pr(S_0) \to 1, \Pr(S_1) \to 0, \Pr(S_2) \to 0
	\end{equation}
	To show that $\Pr(S_2)\to 0$, we use Lemma \ref{lemma:squeeze}. Lemma \ref{lemma:squeeze} can be used for $S_0$ and $S_2$ regions since $k_0$ and $k_2$ satisfy the conditions of the Lemma over regions $\left(0,\frac{k_0+k_1}{2}\right)$ and $\left(\frac{k_1+k_2}{2},1\right)$ respectively. Using property (i) of Lemma \ref{lemma:squeeze}, 
	\begin{equation} \small
	\Pr(|\frac{m}{n} - k_2| \geq \epsilon_n, S_2) \leq  P_{m_2}o(1)
	\end{equation}
	Since $P_{m_2}$ is the local maxima, we can use it as an upper bound over all $P_m$ values in the region between $k_2-\epsilon_n$ and $k_2+\epsilon_n$, which will also be inside $S_2$.
	\begin{equation} \small
	\Pr(|\frac{m}{n} - k_2| < \epsilon_n, S_2) \leq  P_{m_2}2n\epsilon_n = P_{m_2}2cn^{2/3}
	\end{equation}
	\begin{equation} \small \label{eq:101}
	\Pr(S_2) \leq  P_{m_2}(2c n^{2/3}+o(1))
	\end{equation}
	Now we define $k_3$ such that $\int\limits_{k_3}^{k_2} f(k')dk' = 0$ and $k_3 \in (k_0,k_2)$ holds. Such $k_3$ exists since $\int\limits_{k_0}^{k_2} f(k')dk' < 0$ and $f(k)$ is continuous. Then, \vspace{-0.4cm}
	\begin{equation} \small
	\ln\left(\frac{P_{m_3}}{P_{m_2}}\right) \to n\int\limits_{k_3}^{k_2} f(k')dk' = 0
	\end{equation}
	$P_{m_3}$ can be used as a lower bound in interval between $k_0$ and $k_3$, similar to how $P_{m_2}$ was used as an upper bound. Furthermore, $f(k_3)$ must be negative and thus $k_3 \in (k_0,k_1)$. Hence, $k_3$ does not lie in the region $S_2$ and regions $(k_0,k_3)$ and $S_2$ are disjoint:
	\begin{equation} \small \label{eq:103}
	1-\Pr(S_2) \geq \Pr\left(\frac{m}{n} \in (k_0,k_3)\right) \geq P_{m_3}n(k_3-k_0)
	\end{equation}
	Ratio of (\ref{eq:101}) and (\ref{eq:103}) results in the following:
	\begin{equation} \small \label{eq:95}
	\frac{\Pr(S_2)}{1-\Pr(S_2)} \leq \frac{P_{m_2}}{P_{m_3}}\left(\frac{c}{k_3-k_0}n^{-1/3} + o(1/n)\right)
	\end{equation}
	Upper bound of (\ref{eq:95}) goes to 0, so $\Pr(S_2)/(1-\Pr(S_2))$ goes to 0 as well. As a result, $\Pr(S_2) \to 0$. Next, we derive $\Pr(S_1)$.
	Region $S_1$ corresponds to the local minima or the valley of the PMF over the number of active sources. The point with maximum probability (in PMF) in $S_1$ will be one of the endpoints. We use this probability as an upper bound over $S_1$.
	\begin{equation} \small
	\Pr(S_1) < n(\frac{k_2-k_0}{2})\max\{P_{n\frac{k_0+k_1}{2}},P_{n\frac{k_1+k_2}{2}}\}
	\end{equation}
	\begin{equation} \small
	\ln \left(\frac{P_{n\frac{k_0+k_1}{2}}}{P_{nk_0}}\right) \to n\int_{k_0}^{\frac{k_0+k_1}{2}} f(k')dk'
	\end{equation}
	\begin{equation} \small
	\ln \left(\frac{P_{n\frac{k_1+k_2}{2}}}{P_{nk_2}}\right) \to -n\int_{\frac{k_1+k_2}{2}}^{k_2} f(k')dk'
	\end{equation}
	Since $\int_{k_0}^{\frac{k_0+k_1}{2}} f(k')dk' < 0$ and $\int_{\frac{k_1+k_2}{2}}^{k_2} f(k')dk' > 0$, both $P_{n\frac{k_0+k_1}{2}}$ and $P_{n\frac{k_1+k_2}{2}}$ decay exponentially as n grows, hence $\Pr(S_1) \to 0$. Since $\Pr(S_0)+\Pr(S_1)+\Pr(S_2)=1$, we finally obtain $\Pr(S_0) \to 1$. Following bound originates from the conditional probability:
	\begin{equation} \small \label{eq:99}
	    \Pr(|\frac{m}{n} - k_0| < \epsilon_n) \geq \Pr(|\frac{m}{n} - k_0| < \epsilon_n |S_0)\Pr(S_0)
	\end{equation}
	From property (ii) of Lemma \ref{lemma:squeeze},
	\begin{equation} \small \label{eq:100}
	\Pr(|\frac{m}{n} - k_0| < \epsilon_n | S_0) \to 1
	\end{equation}
	Finally, $\Pr(S_0) \to 1$ is used along with (\ref{eq:99}) and (\ref{eq:100}), to obtain (\ref{eq:thm2}).

\vspace{-0.5cm}
\section*{Acknowledgment} \small
This work was supported by TUBITAK grants 117E215 and 119C028, and by Huawei. We thank Mutlu Ahmetoglu for his assistance with simulations.
\vspace{-0.5cm}

\ifCLASSOPTIONcaptionsoff
\newpage
\fi



%
	
	
{\small
\bibliography{references}} 

\begin{thebibliography}{10}

\bibitem{doga}
D.~C. Atabay, E.~Uysal, and O.~Kaya, ``Improving age of information in random
  access channels,'' in {\em Proc. AoI Workshop in conj. with IEEE INFOCOM},
  July 2020.

\bibitem{shirin}
X.~Chen, K.~Gatsis, H.~Hassani, and S.~S. Bidokhti, ``Age of information in
  random access channels,'' {\em arXiv preprint arXiv:1912.01473}, 2019.

\bibitem{Altman2010}
E.~Altman, R.~E. Azouzi, D.~S. Menasch{\'{e}}, and Y.~Xu, ``Forever young:
  Aging control in dtns,'' {\em CoRR}, vol.~abs/1009.4733, 2010.

\bibitem{kaul2012real}
S.~Kaul, R.~Yates, and M.~Gruteser, ``Real-time status: How often should one
  update?,'' in {\em 2012 Proceedings IEEE INFOCOM}, pp.~2731--2735, IEEE,
  2012.

\bibitem{yates2018age}
R.~D. Yates and S.~K. Kaul, ``The age of information: Real-time status updating
  by multiple sources,'' {\em IEEE Transactions on Information Theory},
  vol.~65, no.~3, pp.~1807--1827, 2018.

\bibitem{costa2014age}
M.~Costa, M.~Codreanu, and A.~Ephremides, ``Age of information with packet
  management,'' in {\em 2014 IEEE International Symposium on Information
  Theory}, pp.~1583--1587, IEEE, 2014.

\bibitem{bacinoglu2015age}
B.~T. Bacinoglu, E.~T. Ceran, and E.~Uysal-Biyikoglu, ``Age of information
  under energy replenishment constraints,'' in {\em 2015 Information Theory and
  Applications Workshop (ITA)}, pp.~25--31, IEEE, 2015.

\bibitem{inoue2019general}
Y.~Inoue, H.~Masuyama, T.~Takine, and T.~Tanaka, ``A general formula for the
  stationary distribution of the age of information and its application to
  single-server queues,'' {\em IEEE Transactions on Information Theory},
  vol.~65, no.~12, pp.~8305--8324, 2019.

\bibitem{huang2015optimizing}
L.~Huang and E.~Modiano, ``Optimizing age-of-information in a multi-class
  queueing system,'' in {\em 2015 IEEE International Symposium on Information
  Theory (ISIT)}, pp.~1681--1685, IEEE, 2015.

\bibitem{kosta2017age}
A.~Kosta, N.~Pappas, A.~Ephremides, and V.~Angelakis, ``Age and value of
  information: Non-linear age case,'' in {\em 2017 IEEE International Symposium
  on Information Theory (ISIT)}, pp.~326--330, IEEE, 2017.

\bibitem{kadota2016minimizing}
I.~Kadota, E.~Uysal-Biyikoglu, R.~Singh, and E.~Modiano, ``Minimizing the age
  of information in broadcast wireless networks,'' in {\em Allerton Conf. on
  Communication, Control, and Computing}, pp.~844--851, IEEE, 2016.

\bibitem{yates2017timely}
R.~D. Yates, E.~Najm, E.~Soljanin, and J.~Zhong, ``Timely updates over an
  erasure channel,'' in {\em 2017 IEEE International Symposium on Information
  Theory (ISIT)}, pp.~316--320, IEEE, 2017.

\bibitem{yates2017status}
R.~D. Yates and S.~K. Kaul, ``Status updates over unreliable multiaccess
  channels,'' in {\em 2017 IEEE International Symposium on Information Theory
  (ISIT)}, pp.~331--335, IEEE, 2017.

\bibitem{jiang2018timely}
Z.~Jiang, B.~Krishnamachari, X.~Zheng, S.~Zhou, and Z.~Niu, ``Timely status
  update in massive iot systems: Decentralized scheduling for wireless
  uplinks,'' {\em arXiv preprint arXiv:1801.03975}, 2018.

\bibitem{chen2020age}
H.~Chen, Y.~Gu, and S.-C. Liew, ``Age-of-information dependent random access
  for massive iot networks,'' {\em arXiv preprint arXiv:2001.04780}, 2020.

\bibitem{tsitsiklis86}
J.~N. Tsitsiklis, ``Analysis of a multiaccess control scheme,'' {\em MIT, Tech.
  Rep. LIDS-P-1534}, 1986.

\bibitem{sun2019whittle}
J.~Sun, Z.~Jiang, B.~Krishnamachari, S.~Zhou, and Z.~Niu, ``Closed-form
  whittle’s index-enabled random access for timely status update,'' {\em IEEE
  Trans. on Comm.}, vol.~68(3), 2019.

\bibitem{kosta2019age}
A.~Kosta, N.~Pappas, A.~Ephremides, and V.~Angelakis, ``Age of information
  performance of multiaccess strategies with packet management,'' {\em JCN},
  vol.~21, no.~3, pp.~244--255, 2019.

\bibitem{jiang2019distributed}
Z.~Jiang, S.~Zhou, and Z.~Niu, ``Distributed policy learning based random
  access for diversified qos requirements,'' in {\em IEEE ICC}, pp.~1--6, 2019.

\bibitem{kadotajournal}
I.~Kadota, A.~Sinha, E.~Uysal-Biyikoglu, R.~Singh, and E.~Modiano, ``Scheduling
  policies for minimizing age of information in broadcast wireless networks,''
  {\em IEEE/ACM Trans. on Networking, 26(6)}, pp.~2637--2650, 2018.

\bibitem{maatouk2019minimizing}
A.~Maatouk, M.~Assaad, and A.~Ephremides, ``Minimizing the age of information
  in a csma environment,'' {\em arXiv preprint arXiv:1901.00481}, 2019.

\bibitem{bedewy2020optimal}
A.~M. Bedewy, Y.~Sun, S.~Kompella, and N.~B. Shroff, ``Optimal sampling and
  scheduling for timely status updates in multi-source networks,'' {\em arXiv
  preprint arXiv:2001.09863}, 2020.

\bibitem{talak2018distributed}
R.~Talak, S.~Karaman, and E.~Modiano, ``Distributed scheduling algorithms for
  optimizing information freshness in wireless networks,'' in {\em 2018 IEEE
  19th International Workshop on Signal Processing Advances in Wireless
  Communications (SPAWC)}, pp.~1--5, IEEE, 2018.

\bibitem{munari2020irsa}
A.~Munari and A.~Frolov, ``Average age of information of irregular repetition
  slotted aloha,'' {\em arXiv preprint arXiv:2004.01998}, 2020.

\bibitem{chen2020implementation}
Z.~Han, J.~Liang, Y.~Gu, and H.~Chen, ``Software-defined radio implementation
  of age-of-information-oriented random access,'' {\em arXiv preprint
  arXiv:2003.14329}, 2020.

\bibitem{instruments2007cc2420}
T.~Instruments, ``Cc2420 datasheet,'' 2007.

\bibitem{SunIT2017}
Y.~{Sun}, E.~{Uysal-Biyikoglu}, R.~D. {Yates}, C.~E. {Koksal}, and N.~B.
  {Shroff}, ``Update or wait: How to keep your data fresh,'' {\em IEEE Trans.
  on Info. Theory}, vol.~63, no.~11, pp.~7492--7508, 2017.

\bibitem{gallager}
D.~P. Bertsekas and R.~G. Gallager, {\em Data networks, 2nd ed.}
\newblock Prentice-Hall, 1992.

\bibitem{gallager2012discrete}
R.~G. Gallager, {\em Discrete stochastic processes}, vol.~321.
\newblock Springer Science \& Business Media, 2012.

\bibitem{liva2010graph}
G.~Liva, ``Graph-based analysis and optimization of contention resolution
  diversity slotted aloha,'' {\em IEEE Trans. on Communications}, vol.~59,
  no.~2, pp.~477--487, 2010.

\end{thebibliography}
%







\end{document}